\theoremstyle{definition}
\newtheorem{definition}{Definition}
\newtheorem{proposition}{Proposition}
\newtheorem{corollary}{Corollary}
\newcommand{\cA}{{\mathcal A}}
\newcommand{\cB}{{\mathcal B}}
\newcommand{\cG}{{\mathcal G}}
\newcommand{\cM}{{\mathcal M}}
\newcommand{\cO}{{\mathcal O}}
\newcommand\beq{\begin{equation}}
\newcommand\eeq{\end{equation}}
\newcommand{\be}{\begin{equation}}
\newcommand{\ee}{\end{equation}}
\newcommand{\bes}{\begin{eqnarray}}
\newcommand{\ees}{\end{eqnarray}}
\newcommand{\bea}{\begin{eqnarray}}
\newcommand{\eea}{\end{eqnarray}}
\def\vphi{{\varphi}}
\newcommand{\U}{\mathrm{U}}
\newcommand{\mO}{\mathrm{O}}
\def\extd{\mathrm {d}}
\newcommand{\e}{\epsilon}
\newcommand\acts\triangleright
\newcounter{letter} \newcounter{numeral} \newcounter{Numeral}
\def\vphi{\varphi}
\def\e{\mbox{e}}
\def\extd{\mathrm {d}}
\newtheorem{theo}{Theorem}
\newtheorem{lemma}[theo]{Lemma}
\begin{document}



\title{$O(N)$ Random Tensor Models}

\author{Sylvain Carrozza}\email{sylvain.carrozza@labri.fr}\affiliation{Univ. Bordeaux, LaBRI, UMR 5800, 33400 Talence, France}
\author{Adrian Tanasa}
\email{adrian.tanasa@labri.fr}\affiliation{Univ. Bordeaux, LaBRI, UMR 5800, 33400 Talence, France}
\affiliation{H. Hulubei National Institute for Physics and Nuclear Engineering, P.O.B. MG-6, 077125 Magurele, Romania}
\affiliation{IUF, 1 rue Descartes, 75231 Paris Cedex 05}

\begin{abstract}
\bigskip

We define in this paper a class of three indices tensor models, endowed with $O(N)^{\otimes 3}$ invariance ($N$ being the size of the tensor). This allows to generate, via the usual QFT perturbative expansion, a class of Feynman tensor graphs which is strictly larger than the class of Feynman graphs of both the multi-orientable model (and hence of the colored model) and the $\U(N)$ invariant models. We first exhibit the existence of a large $N$ expansion for such a model with general interactions. We then focus on the quartic model and we identify the leading and next-to-leading order (NLO) graphs of the large $N$ expansion. Finally, we prove the existence of a critical regime and we compute the critical exponents, both at leading order and at NLO. This is achieved through the use of various analytic combinatorics techniques.
\end{abstract}
\maketitle
{\emph{Keywords:}} tensor models ; colored graphs ; analytic combinatorics.

{\emph{MSC(2010) classification:} 83C27 ; 81T18 ; 05C30. 


\section{Introduction}

Random tensor models, seen as a natural generalization of the celebrated matrix models to dimension higher than two, imposed themselves over the last years as an emergent domain at the very frontier of theoretical physics, combinatorics and other domains of mathematics. 
The so-called \emph{colored models} \cite{razvan_colors, razvan_jimmy_rev} brought new combinatorial tools to tensor models, which tremendously improved our control over the definition and topological content of such models.
This allowed in particular to demonstrate the existence of a large $N$ expansion \cite{RazvanN, RazvanVincentN, razvan_complete}, which generalizes the genus expansion of matrix models in a non-trivial way. Though not a topological expansion -- as is to be expected in the absence of a simple classification of topological manifolds in dimension higher than two --, it opened the way to many developments and generalizations of matrix models results. In particular, it was proven that spherical triangulations of a particular type, called melonic triangulations, dominate the large $N$ expansion and lead to a continuum limit \cite{continuum_TM}. 
The random geometry defined by this melonic phase was later studied in more detail \cite{jr_branched}, and was confirmed to lie in the same universality class as the so-called branched polymer phase of dynamical triangulations \cite{book_ambjorn}. 

These simplicial colored models soon lead to a first generalization, the so-called \emph{uncolored} or \emph{invariant} tensor models \cite{uncolored}. 
%
%
Unlike simplicial models, which are constructed out of a single type of interaction (the $d$-simplex), 
the theory space of invariant tensor models is specified by a $\U(N)^{\otimes d}$ symmetry (where $d$ is the number of indices), which allows infinite sets of interactions. 
The colored simplicial structure of the original models is encoded in the non-trivial interior structure of the interaction vertices -- also called bubbles. Many interesting results have been gathered for both colored or uncolored tensor models, including the characterization of next-to-leading order Feynman graphs in the large $N$ expansion and the definition of a double-scaling limit \cite{ds_WDJ, ds_SRV, ds_VRJA}, or universality results \cite{universality} proven with constructive methods \cite{razvan_beyond, borel_TR}.

The \emph{multi-orientable} (MO) models, which have been defined in \cite{MO-original} (see also \cite{Tanasa-praa} for a short review), remained in the simplicial context but relaxed the coloring conditions of the original colored models. This type of models thus allowed to generate, {\it via} the usual QFT perturbative expansion, a class of Feynman graphs which is strictly larger than the class of graphs of the colored models. Moreover, the MO model were endowed with a large $N$ expansion mechanism \cite{DRT}. This expansion was analyzed, from a combinatorial point of view, first at leading \cite{DRT} and next-to-leading \cite{RT} orders and then at any order \cite{FT}. This allowed to implement the celebrated double-scaling mechanism \cite{GTY} - see also \cite{tanasa-last} for a recent review on these topics.


All of these purely combinatorial results are at the basis of more involved models, known as tensorial group field theories, which are field theories with the same bare combinatorial structure \cite{tensor_4d, tensor_3d, cor_u1, Dine_Fabien, cor_su2, VD}. 
Some of these models can be understood as toy-models for quantum gravity \cite{cor_su2, discrete_rg, 4-eps, VD}; they have been extensively studied through QFT 
renormalization techniques, see \cite{thesis} and references therein. 


\medskip

In order to unravel new aspects of random geometry in dimension higher than two, it appears crucial to us to aim at a step by step enlargement of the set of interactions and combinatorial structures allowed by the definition of our models. 

In this paper we perform such a step with the introduction of a new class of tensor models,
based on an $\mO(N)^{\otimes d}$ rather than a $\U(N)^{\otimes d}$ invariance. This leads to a larger class of allowed interactions, still labeled by colored graphs, but not necessarily bipartite. We focus on the $d=3$ situation, and show that these real tensor models contain the colored, uncolored and multi-orientable models as special cases. This is a very natural arena in which to further generalize tensor methods, which may provide a suitable enlarged theory space for tensorial group field theories.    

For the sake of completeness, let us mention that matrix models with $\mO(N)$ invariance have also been studied in papers such as \cite{zinn}, where tangle and link counting results have been proved.

\medskip

This paper is structured as follows. In the following section we introduce the general class of real tensor models, with arbitrary many (but a finite number of) interaction terms in the action. We prove the existence of a large $N$ expansion  which generalizes 
the large $N$ expansions of colored, uncolored and multi-orientable models. In section \ref{sec:quartic}, we restrict our attention to the most general quartic model (with invariance under permutation of the colors), parametrized by two independent coupling constants, and 
characterize the leading and next-to-leading orders in $1/N$. This allows us to prove, in section \ref{sec:critical}, the existence of a critical regime and compute its critical exponents, both at leading and next-to-leading orders. We rely to this effect on the analytic properties of the two-point function and on methods of analytic combinatorics -- which allow us to also derive some asymptotic estimates. 

\section{General model and large N expansion}


 In this section we define the tensor model we are interested in and we implement its large $N$ expansion. We consider a real tensor $T_{i_1 i_2 i_3}$ 
with three indices ranging from $1$ to $N \in \mathbb{N}^*$. The group $\mO(N)^{\otimes 3}$ acts on $T$ as
 \beq
 T_{i_1 i_2 i_3} \rightarrow O^{(1)}_{i_1 j_1} O^{(2)}_{i_2 j_2} O^{(3)}_{i_3 j_3} T_{j_1 j_2 j_3}
 \eeq
 where $O^{(k)}$ are three independent orthogonal matrices. 
 We further define the partition function
 \beq
 Z_N  := \int [\extd T] \, \e^{-N^{3/2} S_N (T)}
 \eeq
 where the action $S_N (T)$ is required to be invariant under $\mO(N)^{\otimes 3}$, and $[\extd T]$ is the product of the Lebesgue measures associated to the entries $T_{i_1 i_2 i_3}$. The action $S_N$ decomposes as a sum over tensor invariants $I_b$, weighted by coupling constants $t_b$:
 \beq
 \label{action}
 S_N (T) = \sum_{b \in \cB} t_b \, N^{- \rho(b) } \, I_b(T)\,, 
 \eeq
 where we have denoted by $\cB$ the set of $O(N)$ tensor invariants, and introduced a free scaling parameter $\rho(b)$ for each $b \in \cB$. These tensor invariants are in one-to-one correspondence with $3$-colored graphs\footnote{We refer in this paper to edge-coloring, with color labels $\ell \in \{ 1, 2, 3\}$.}
but \emph{not necessarily bipartite}. This is the only difference with respect to $\U(N)$ invariant models. The mapping between graphs and invariants is as follows: each node of the graph is associated to a tensor, and each line of color $\ell$ represents the contraction of the $\ell^{\mathrm{th}}$ indices of two tensors.

Let us recall that, in the tensor model framework, one calls these invariants {\it bubbles} and they are vertices of the QFT model to be considered (see for example \cite{uncolored} for details on this type of construction).


We furthermore restrict to connected invariants, that is to those invariants which are labeled by connected colored graphs.
In Figure \ref{examples_bubbles}, we have depicted the two- and four-point bubbles (up to a permutation of the color labels).
\begin{figure}[ht]
  \centering
 	\includegraphics[scale=0.8]{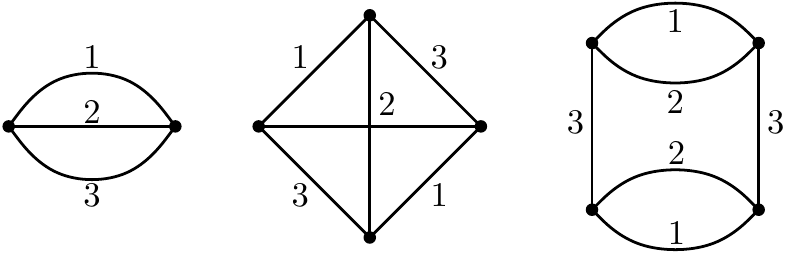}     
  \caption{The two- and four-valent bubbles/vertices of the model.}
  \label{examples_bubbles}
\end{figure}
The second bubble in Figure \ref{examples_bubbles} has the same combinatorial structure as the tetrahedron in spin foam \cite{review_marc, review_alejandro} and group field theory models \cite{freidel_gft, daniele_rev2011} for 3d quantum gravity. Note nonetheless that as a colored graph it encodes the topology of the projective plane \cite{Vince_2d}. The third bubble in Figure \ref{examples_bubbles} represents the so-called pillow term (see \cite{FL} and \cite{karim}), and like the first bubble of Figure \ref{examples_bubbles} is associated to a spherical triangulation.
The first and the third bubbles can be given a biparite structure such that they become vertices of the complex model. However,
the second bubble is not allowed by the $\U(N)^{\otimes 3}$ invariance; it is a new ingredient of the $\mO(N)^{\otimes 3}$ model. 
%

 The leftmost bubble of Figure \ref{examples_bubbles} -- call it $b_2$ -- encodes the quadratic part of the action. Since we will evaluate the model in perturbation around the Gaussian contribution, we fix the normalization of the invariant associated to $b_2$ to: $t_{b_2} = 1$ and $\rho( b_2) = 0$. The weight $N^{3/2}$ in front of the action is the natural one if one wants the second bubble of Figure \ref{examples_bubbles} to contribute at leading order (assuming that $\rho = 0$ for this bubble), as already observed in colored \cite{razvan_complete} and MO \cite{DRT} models. The function $\rho$ will be adjusted below after analysis of the amplitudes. Note that the usual uncolored models \cite{uncolored} come with a global weight $N^2$ instead, and individual weights $N^{-2 \omega(b)}$ (where $\omega (b)$ is the \emph{degree}  of the colored graph $b$) in front of the coupling constants $t_b$.

As usual, one splits the action between the quadratic part, which provides a notion of propagation, and the higher order terms which are Taylor expanded. 
As in ordinary uncolored tensor models \cite{uncolored}, the resulting Feynman amplitudes $\cA_\cG$ are labeled by graphs $\cG$, made out of bubble vertices and propagator lines, which are represented as dashed lines in our figures. Feynman diagrams are in particular $4$-colored graphs, a fourth color $0$ being attributed to the dashed lines. Three simple examples are given in Figure \ref{non-MO} and \ref{ex_ampl} (we will leave some of the color labels implicit in our pictures). 

As already mentioned in the introduction, the model we propose here generates a larger class of Feynman graphs than both the MO model (and hence than the colored one) and the $\U(N)$ invariant models. One has:
\begin{proposition}
\label{generalizare}
The sets of Feynman graphs generated by the MO action or a $\U(N)^{\otimes 3}$ invariant action are strict subsets of the set of Feynman graphs generated by 
the real action \eqref{action}.
\end{proposition}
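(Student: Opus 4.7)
My plan is to prove the two set inclusions by a general structural argument, and then establish strictness by means of two explicit low-order counterexamples.

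For the forward inclusions, I would first observe that the bubbles labeling a $\U(N)^{\otimes 3}$-invariant action are in bijection with connected \emph{bipartite} $3$-colored graphs, the bipartition recording the positions of $T$ versus $\bar T$ legs. Bipartite $3$-colored graphs form a distinguished subfamily of the connected $3$-colored graphs that classify $\mO(N)^{\otimes 3}$ invariants, so any $\U(N)^{\otimes 3}$-invariant action can be recovered from \eqref{action} by switching off the couplings $t_b$ attached to non-bipartite bubbles and matching the remaining ones. The same reasoning applies to the MO action, whose bubbles are again a particular subclass of $3$-colored graphs (those compatible with the MO orientation rule at each node). Since the Feynman rules on any common bubble coincide up to the $N$-scaling prefactor, which is absorbed into the free parameter $\rho(b)$, the Feynman graphs of the MO model and of any $\U(N)^{\otimes 3}$-invariant model appear among the Feynman graphs of \eqref{action}.

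To establish that both inclusions are strict I would exhibit one real Feynman graph outside each smaller class. For the $\U(N)^{\otimes 3}$ case, the natural candidate is the second bubble of Figure \ref{examples_bubbles}: it is a connected $3$-colored graph carrying an odd alternating cycle, hence non-bipartite, and therefore cannot arise as a $\U(N)^{\otimes 3}$ invariant; any vacuum graph built from it (for instance, two copies glued by three color-$0$ propagator lines) is then a Feynman graph of \eqref{action} not generated by any $\U(N)^{\otimes 3}$-invariant action. For the MO case, I would use the graph displayed in Figure \ref{non-MO} and verify that its node structure is incompatible with the MO orientation convention, so that it is absent from the MO perturbative expansion while being perfectly allowed by \eqref{action}.

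The only real obstacle is bookkeeping: one must fix unambiguously what is meant by \emph{the Feynman graphs of model $X$} in each of the three settings (MO, $\U(N)^{\otimes 3}$, $\mO(N)^{\otimes 3}$), check that the bubble constructions and Wick contractions are compatible across models, and confirm that the two counterexamples genuinely fail the MO orientation rule and the bipartiteness constraint respectively. Once these conventions are pinned down, both inclusions are automatic and strictness is witnessed by the elementary examples above; no quantitative combinatorial estimate is required.
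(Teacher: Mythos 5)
Your proposal is correct and follows essentially the same route as the paper: both establish the inclusions by viewing MO and $\U(N)^{\otimes 3}$ Feynman graphs as restricted subclasses of the real model's graphs (noting that the MO restriction lives on the propagator connections — black nodes to white nodes — rather than on the bubble set, a point your phrasing slightly blurs), and both witness strictness by explicit examples, the paper using the single graph of Figure \ref{non-MO} for both claims. Your only variant is invoking the non-bipartiteness of the tetrahedral bubble itself for the $\U(N)^{\otimes 3}$ case, which is sound since any Feynman graph containing it carries an odd cycle; just note the small slip that two tetrahedral bubbles glued by three color-$0$ lines form a $2$-point graph rather than a vacuum graph --- closing them requires four propagator lines, as in the left graph of Figure \ref{ex_ampl}.
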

\begin{proof}
One can easily check that the  interaction given by the second bubble in Figure \ref{examples_bubbles} already generates, by perturbative expansion, a strictly larger class of graphs than the class of MO tensor graphs. Indeed, MO graphs may be generated by the vertex of Figure \ref{MO_vertex}, with the additional condition that propagator lines must connect black nodes to white nodes. Hence all MO graphs may be realized in our $\mO(N)$ model, while for instance the Feynman graph of Figure \ref{non-MO} cannot be given the structure of a MO graph. Finally, the graphs generated by the $\U(N)^{\otimes 3}$ models coincides with the subclass of $\mO(N)^{\otimes 3}$ graphs which can be given a bipartite structure. It is a strict subclass since the graph of Figure \ref{non-MO} does not admit any bipartite coloring of its nodes.
\end{proof}

\begin{figure}[h]
  \centering
 	\includegraphics[scale=0.8]{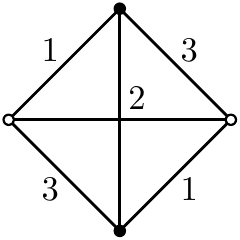}     
  \caption{The unique MO interaction vertex.}\label{MO_vertex}
\end{figure}

\begin{figure}[h]
  \centering
 	\includegraphics[scale=0.8]{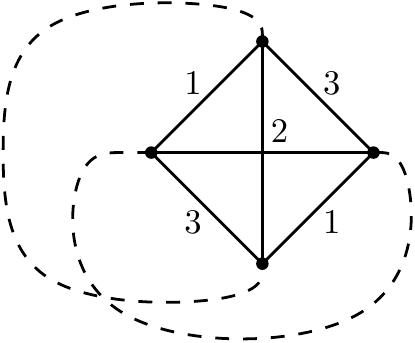}     
  \caption{A Feynman graph of the $\mO(N)^{\otimes 3}$ model which is neither MO nor bipartite.}\label{non-MO}
\end{figure}

One may perturbatively expand the partition function around the Gaussian, leading to:
\beq
Z_N = \sum_\cG \prod_{b \in \cB\vert N_b > 2} ( - t_b N^{- \rho(b)} )^{n_b(\cG)} \, \cA_\cG\,,
\eeq
where $\cA_\cG$ is the amplitude associated to a graph $\cG$, and $n_b (\cG)$ is the number of bubble vertices of type $b$ in $\cG$. A standard calculation (see e.g. \cite{RazvanN}) shows that the amplitude $\cA_\cG$ of a closed graph $\cG$ is
\beq
\cA_\cG \propto N^{3 - \omega (\cG)}\,,
\eeq
where we have defined the degree $\omega (\cG)$ as:
\beq
\omega (\cG) :=  3 + \frac{3}{2} L(\cG) - \sum_{b \vert N_b > 2} \left( \frac{3}{2} - \rho(b) \right) n_b (\cG) -  F(\cG)\,.
\eeq
In the definition above, we have denoted by $L$ and $F$ respectively the number of lines and faces\footnote{In this context, see e.g. \cite{uncolored}, a \emph{face} of color $\ell$ is a cycle of alternating color-$0$ lines and color-$\ell$ edges. We also define the \emph{length} of a face as its number of color-$0$ lines.} of $\cG$ (these are notations we will stick to throughout the paper).

This form of the Feynman amplitudes entails the existence of a $1/N$ expansion, provided that $\omega$ is bounded from below. We will show that this can be achieved. We will furthermore choose $\rho$ as small as possible, so that the class of leading order graphs in $N$ is as large as possible. 

In order to conveniently count the number of faces $F$, we introduce the notion of \emph{jacket}, which is defined similarly as in the rank-$3$ colored framework \cite{RazvanN, Ryan_jacket}. 
\begin{definition}
For any graph $\cG$ and any $\ell \in \{ 1, 2, 3 \}$, the \emph{jacket} $J_\ell (\cG)$ is the $3$-colored graph obtained from $\cG$ after deletion of all it color-$\ell$ edges.
\end{definition}
\noindent Equivalently, each $J_\ell$ is obtained by deletion of all the faces of color $\ell$, and hence represents a ribbon graph with faces of colors in $\{ 1, 2 , 3 \} \setminus \{ \ell \}$. A jacket therefore represents a closed and possibly non-orientable surface. Unlike in \cite{RazvanN, Ryan_jacket} the jacket of a connected graph is not necessarily connected -- see Fig. \ref{jacket} for examples of such jackets, associated to the tensor graphs of Figure \ref{ex_ampl}.

\begin{figure}[h]
  \centering
 	\includegraphics[scale=0.8]{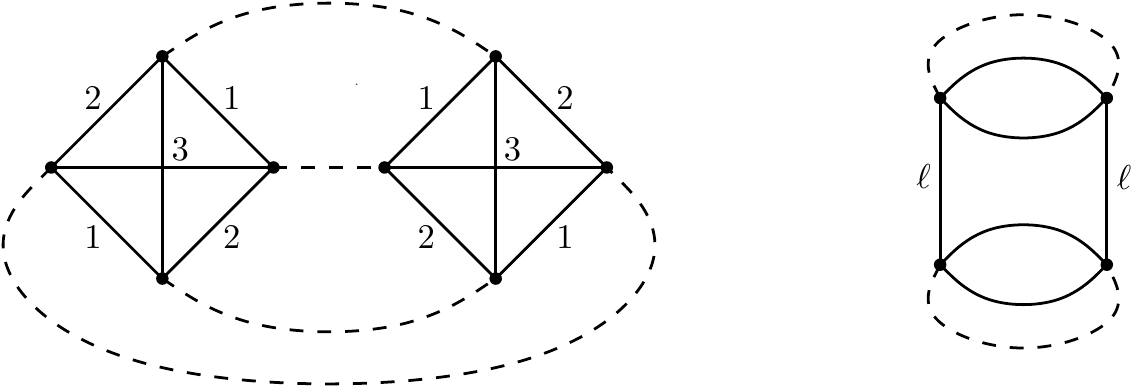}     
  \caption{Two vacuum (and melonic) graphs.}\label{ex_ampl}
\end{figure}

\begin{figure}[h]
  \centering
 	\includegraphics[scale=0.8]{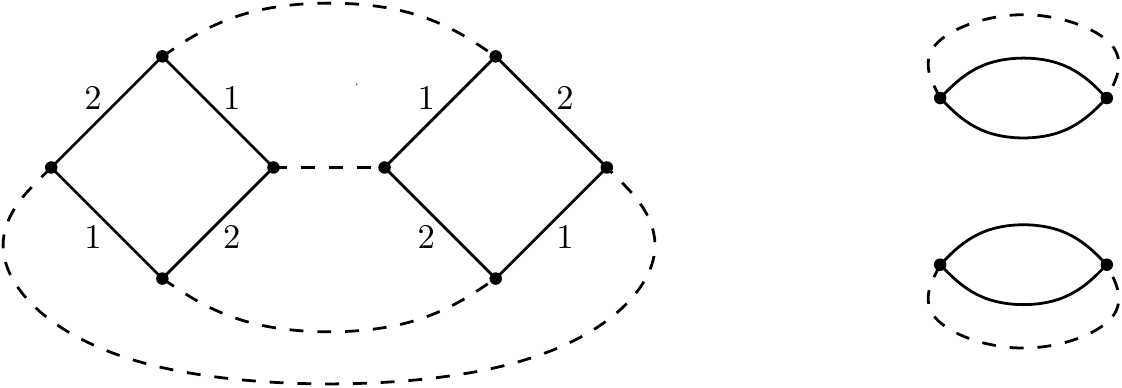}     
  \caption{Jackets associated to the tensor graphs of Figure \ref{ex_ampl}. On the right-hand-side, two connected components are generated by the removal of the lines of color $\ell$.}\label{jacket}
\end{figure}

Note that for graphs which are multi-orientable, the notion of jacket introduced above coincides with the one of multi-orientable jacket \cite{DRT}.

Faces can be counted as follows
\beq
2 F = \sum_{\ell=1}^3 f( J_\ell )
=  \sum_{\ell \, ; \, i} f( J_\ell^{(i)} ) \,,
\eeq
where $f ( J_\ell^{(i)} )$ is the number of faces of the $i^{\rm{th}}$ connected component of $J_\ell$. This number can be expressed in terms of the non-orientable genus $k$ as:
\beq
f( J_\ell^{(i)} ) = 2 - v ( J_\ell^{(i)} ) + e ( J_\ell^{(i)} ) - k ( J_\ell^{(i)} )\,,
\eeq
where $e$ (resp. $v$) is the number of edges (resp. vertices) of the jackets. 

To facilitate the analysis in the case of multiple connected components, let us introduce the quantity
\beq
\delta_\ell := \vert J_\ell \vert - 1\,,
\eeq
where $\vert J_\ell \vert$ denotes the number of connected components of $J_\ell$. Likewise, we define
\beq
\delta_\ell^{(b)} := C_b^{\ell} - 1\,,
\eeq
where $C_b^{\ell}$ is the number of connected components of the $2$-colored graph obtained from $b$ after deletion of all the color-$\ell$ lines. 
Remark that
\bes
\sum_{ i} e( J_\ell^{(i)} )  &=& L(\cG) = \sum_{b \vert N_b > 2} n_b \frac{N_b}{2}\,,  \\
\sum_{ i} v( J_\ell^{(i)} )  &=& \sum_{b \vert N_b > 2} n_b \left( 1 + \delta_\ell^{(b)} \right) \,,
\ees
where $N_b$ is the number of nodes of the bubble $b$ (i.e. the valency of the tensor interaction). In terms of these quantities, $\omega$ can be reexpressed in the form
\beq
\omega (\cG) = \frac{1}{2} \sum_{\ell \, ; \, i} k(J_{\ell}^{(i)}) + \sum_{b \vert N_b > 2} n_b \left( \rho(b) + \frac{1}{2} \sum_\ell \delta_\ell^{(b)} \right) - \sum_\ell \delta_\ell
\eeq

The key simple observation leading to the definition of a $1/N$ expansion is the following.
\begin{lemma}\label{lemma_ineq_rho} 
For any $\ell \in \{ 1,2,3 \}$, one has 
 \beq
 \sum_{b \vert N_b > 2} n_b \, \delta_\ell^{(b)} \geq \delta_\ell\,.
 \eeq
 Moreover, the number $n_b$ of vertices of each type being given, one can always construct a graph $\cG$ on these vertices which saturates the bound. 
\end{lemma}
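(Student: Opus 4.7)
I would establish the inequality via a contraction-and-splitting argument on a reduced graph. Define the \emph{bubble graph} $\tilde{\cG}$ obtained from $\cG$ by contracting every bubble to a single super-vertex, keeping the $L(\cG)$ propagator (color-$0$) lines as edges, with self-loops allowed whenever a propagator connects two nodes of the same bubble. Since $\cG$ is connected, $\tilde{\cG}$ is connected as well: it has $n := \sum_{b\mid N_b > 2} n_b$ super-vertices and exactly one connected component.

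Next, refine $\tilde{\cG}$ into $\tilde{J}_\ell$ by replacing, inside each bubble $b$, the single super-vertex with the $C_b^\ell$ super-vertices corresponding to the color-$\ell$ connected components of $b$; the propagator lines re-attach to the appropriate pieces. Directly from the definition of the jacket, $|\tilde{J}_\ell| = |J_\ell|$. The key elementary fact is that splitting a single vertex of a graph into $C$ pieces raises the number of connected components by at most $C - 1$. Performing such splits successively on each of the $n$ super-vertices, starting from the single connected component of $\tilde{\cG}$, yields
\[
|J_\ell| \;\le\; 1 + \sum_{b\mid N_b > 2} n_b\,(C_b^\ell - 1) \;=\; 1 + \sum_{b\mid N_b > 2} n_b\,\delta_\ell^{(b)},
\]
which, together with the definition $\delta_\ell = |J_\ell| - 1$, is exactly the claimed inequality.

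For the saturation statement I would construct an explicit Feynman graph on the prescribed bubbles in which every split is maximally effective. The idea is to pick one external node in each color-$\ell$ component of every bubble, connect these external nodes across distinct bubbles by a spanning tree of $n-1$ propagators, and complete the perfect matching by pairing the remaining nodes via propagators that lie \emph{inside} a single color-$\ell$ component of a single bubble. Then no propagator can reconnect two distinct color-$\ell$ components of the same bubble through an alternative path in $J_\ell$, so each split contributes the maximum $\delta_\ell^{(b)}$ new components and the bound is saturated.

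The main obstacle is the parity condition on the internal pairing step: each color-$\ell$ component must contain an even number of nodes left unmatched by the external tree. This can always be arranged by locally rerouting the tree so that it passes through components of odd parity, and checking that this rerouting preserves both the connectedness of $\tilde{\cG}$ and the separation of color-$\ell$ components is a routine, though slightly tedious, combinatorial verification.
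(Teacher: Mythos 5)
Your proof of the inequality is correct, and it is essentially a careful, fleshed-out version of the paper's own (very terse) argument. Contracting each bubble to a super-vertex, invoking connectedness of $\cG$, and then splitting each super-vertex into its $C_b^\ell$ color-$\ell$ pieces -- noting that $|\tilde{J}_\ell| = |J_\ell|$ because each piece is a connected subgraph of $J_\ell$, and that each split raises the component count by at most $C_b^\ell - 1$ -- is exactly what is needed. In fact your version is more complete than the paper's: the paper's literal justification (``each connected component of $J_\ell$ is supported on at least one connected component of the colored subgraph'') only yields the weaker bound $|J_\ell| \le \sum_b n_b C_b^\ell$, and your splitting argument supplies the missing use of connectedness that upgrades it to $|J_\ell| \le 1 + \sum_b n_b \delta_\ell^{(b)}$.

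The saturation half, however, has a genuine gap, located precisely at the step you flag as ``routine.'' Since each bubble is $3$-regular and properly edge-colored, every color-$\ell$ piece is a cycle alternating the two remaining colors, hence \emph{always} has an even number of nodes. A spanning tree of single propagators therefore can never leave all pieces with even leftovers: any leaf bubble of the tree receives exactly one tree endpoint, which lands in some piece and leaves an odd number of that piece's nodes unmatched. No ``local rerouting'' of the tree can cure this -- the parity defect at the leaves is forced -- so the verification you defer does not exist. (Your setup also marks one node per piece, i.e.\ $\sum_b n_b C_b^\ell$ nodes, while the tree uses only $2(n-1)$ endpoints; this mismatch is a symptom of the same problem.) The correct repair is different and easy: pair the excess nodes across pieces that the tree has \emph{already} joined. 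Concretely, replace each tree propagator by two parallel propagators connecting the same pair of pieces (taking the bubble tree to be a path, each bubble hosts at most four endpoints, which fits since $N_b \ge 4$ and every piece has at least two nodes). Each doubled connection removes an even number of nodes from each affected piece, so all internal pairings exist; and since parallel propagators between an already-joined pair of pieces create no new inter-piece connections, $J_\ell$ has exactly $\sum_b n_b C_b^\ell - (n-1)$ components, i.e.\ $\delta_\ell = \sum_b n_b \delta_\ell^{(b)}$, saturating the bound. Note that the paper's own recipe (``pairing half-lines to half-lines in the same connected component'') is equally terse and, taken literally, produces a disconnected graph whenever there is more than one bubble; it needs the same connecting step.
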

\begin{proof}
By definition, each connected component of $J_\ell$ is supported on at least one connected component of the colored subgraph with colors $\{1,2,3\} \setminus \{ \ell \}$, hence the inequality. And one obtains an equality by pairing half-lines to half-lines in the same connected component of the color $\{1,2,3\} \setminus \{ \ell \}$ subgraph.   
\end{proof}

This allows us to conclude that the optimal choice for $\rho$ is:
\beq\label{def_rho}
\boxed{\rho(b) :=\frac{1}{2} \sum_\ell \delta_\ell^{(b)} }
\eeq
which we assume in the rest of the paper. Note that in terms of the number of colored faces\footnote{Where a \emph{colored face} of $b$ is a cycle of alternating colors $i$ and $j \in \{1,2,3\}$ in $b$.} $F_b$ of $b$, this definition coincides with: 
\beq
\rho(b) = \frac{F_b - 3}{2}\,.
\eeq
\begin{proposition}
Assuming a scaling of the coupling constants as defined by (\ref{def_rho}) and (\ref{action}), the amplitude of a graph $\cA_\cG$ is proportional to $N^{3 - \omega(\cG)}$, where the degree can be expressed as:
\beq\label{omega_j}
\omega = \frac{1}{2} \sum_{\ell \, ; \, i} k(J_{\ell}^{(i)}) + \sum_{b \vert N_b > 2} n_b  \sum_\ell \delta_\ell^{(b)} - \sum_\ell \delta_\ell\,. 
\eeq
Furthermore, $\omega(\cG) \in \frac{\mathbb{N}}{2}$.
\end{proposition}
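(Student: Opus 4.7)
The plan is to note that essentially all of the work has already been done in the paragraphs preceding the statement: the amplitude formula $\cA_\cG\propto N^{3-\omega(\cG)}$ and the expression
\[
\omega(\cG) = \frac12\sum_{\ell;i} k(J_\ell^{(i)}) + \sum_{b\vert N_b>2} n_b\!\left(\rho(b)+\tfrac12\sum_\ell \delta_\ell^{(b)}\right) - \sum_\ell \delta_\ell
\]
are already derived using $2F=\sum_{\ell,i} f(J_\ell^{(i)})$ together with the Euler-characteristic formula $f(J_\ell^{(i)})=2-v(J_\ell^{(i)})+e(J_\ell^{(i)})-k(J_\ell^{(i)})$ and the edge/vertex counts in terms of $n_b$, $N_b$ and $\delta_\ell^{(b)}$. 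So the first part of the proposition is a direct substitution: plugging $\rho(b)=\tfrac12\sum_\ell\delta_\ell^{(b)}$ into the displayed expression cancels the parenthesis into $\sum_\ell\delta_\ell^{(b)}$, yielding precisely \eqref{omega_j}. The equivalent expression $\rho(b)=(F_b-3)/2$ in terms of the colored faces of $b$ follows from the same Euler-characteristic counting applied to the bubble $b$ viewed as a closed $3$-colored graph, and can be invoked as a sanity check.

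For the integrality/positivity claim $\omega(\cG)\in\mathbb{N}/2$, I would split the expression into two pieces and control each independently. The term $\frac12\sum_{\ell,i} k(J_\ell^{(i)})$ is a non-negative half-integer because the non-orientable genus $k$ of any closed (possibly non-orientable) surface is a non-negative integer, and the jackets $J_\ell^{(i)}$ are exactly such surfaces. The remaining combinatorial piece $\sum_{b\vert N_b>2} n_b\sum_\ell\delta_\ell^{(b)}-\sum_\ell\delta_\ell$ rewrites as $\sum_\ell\bigl(\sum_b n_b\delta_\ell^{(b)}-\delta_\ell\bigr)$, which by Lemma \ref{lemma_ineq_rho} applied color by color is a sum of three non-negative integers. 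Adding the two contributions shows $\omega(\cG)\ge 0$ and $\omega(\cG)\in\tfrac12\mathbb{Z}_{\ge 0}=\mathbb{N}/2$.

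The only step requiring any care is the consistency between the two expressions for $\rho(b)$: one must check that setting $\rho(b)=\tfrac12\sum_\ell\delta_\ell^{(b)}$ indeed equals $(F_b-3)/2$. This is the same face-counting identity applied to the bubble $b$ itself, noting that $b$ has three jackets each of which is a disjoint union of spheres (being made of $2$-colored cycles), so that the $k$-contribution vanishes and only the $\delta_\ell^{(b)}$ terms survive. I do not foresee any genuine obstacle beyond this bookkeeping, since both the large-$N$ bound and the integrality rely on ingredients already established (the $\omega$ formula, Lemma \ref{lemma_ineq_rho}, and the classification of closed surfaces).
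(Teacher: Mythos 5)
Your proof is correct and takes essentially the same route as the paper: it obtains \eqref{omega_j} by substituting $\rho(b)=\frac{1}{2}\sum_\ell\delta_\ell^{(b)}$ into the jacket-based expression for $\omega$, and then gets $\omega(\cG)\in\frac{\mathbb{N}}{2}$ from the integrality and positivity of the demigenus $k$ combined with Lemma \ref{lemma_ineq_rho} applied color by color. Your additional verification that this choice of $\rho(b)$ equals $(F_b-3)/2$ (via identifying the bicolored components of $b$ with its colored faces) matches the paper's remark but is not needed for the proposition itself.
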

\begin{proof}
The expression of $\omega$ is a simple consequence of the relations defining $\rho$ and $\omega$ and the other combinatorial quantities. Equation (\ref{omega_j}), together with the fact that the demigenus $k$ is an integer\footnote{The demigenus (or half-genus) $k$ is usually invoked to label the topology of non-orientable surfaces, while for orientable surfaces, the genus $g = \frac{k}{2}$ is in general preferred. We will work with the demigenus throughout the paper, irrespectively of the orientability of the surface. This means in particular that $k$ will be even in the case of an orientable surface, and odd in the case of a non-orientable one.}, implies that $\omega \in \frac{\mathbb{Z}}{2}$. Furthermore, Lemma \ref{lemma_ineq_rho} and the positivity of $k$ imply that $\omega$ is itself positive.
\end{proof}

Remark that the choice of weight (\ref{def_rho}) ensures that:
\begin{equation}\label{morphism_rho}
\rho( b_1 \sharp b_2 ) = \rho( b_1 ) + \rho( b_2 ), \quad \forall b_1 , b_2 \in \cB\,, 
\end{equation}
where we loosely denote by $b_1 \sharp b_2$ any connected sum of $b_1$ and $b_2$, that is any bubble obtained by first connecting $b_1$ and $b_2$ with a propagator and then contracting this propagator. Such a contraction is called \emph{$1$-dipole} contraction in the literature. 
{\it A posteriori}, an independent motivation for our choice of scaling function $\rho$ is that it renders the degree $\omega$ invariant under these $1$-dipole contractions.

\begin{proposition} The leading order graphs are characterized by:
\bes
\forall (\ell , i ),&& \qquad k(J_{\ell}^{(i)}) = 0\,, \label{cond_k} \\
\forall \ell, && \qquad \delta_\ell = \sum_{b \vert N_b > 2} n_b \, \delta_\ell^{(b)}. \label{cond_delta}
\ees
\end{proposition}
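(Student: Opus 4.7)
The plan is to read off the characterization directly from the expression \eqref{omega_j} for the degree, by decomposing $\omega$ as a sum of manifestly non-negative quantities, each of which must vanish at leading order.

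First I would rewrite the degree as
\begin{equation*}
\omega(\cG) = \underbrace{\frac{1}{2} \sum_{\ell , i} k(J_{\ell}^{(i)})}_{(A)} + \underbrace{\sum_\ell \Bigl( \sum_{b \vert N_b > 2} n_b \, \delta_\ell^{(b)} - \delta_\ell \Bigr)}_{(B)}\,.
\end{equation*}
The term $(A)$ is non-negative because the demigenus of any closed surface satisfies $k \geq 0$. The term $(B)$ is a sum of three contributions (one per color), each of which is non-negative by Lemma \ref{lemma_ineq_rho}. Hence $\omega(\cG) \geq 0$, and leading order amounts to $\omega(\cG) = 0$, which requires $(A) = 0$ and $(B) = 0$ simultaneously.

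Next I would extract the two conditions. The vanishing of $(A)$ forces $k(J_\ell^{(i)}) = 0$ for every $\ell \in \{1,2,3\}$ and every connected component $i$ of $J_\ell$, i.e. each jacket component is a sphere; this is precisely \eqref{cond_k}. The vanishing of $(B)$, combined with the color-by-color bound of Lemma \ref{lemma_ineq_rho}, forces each of the three non-negative summands to vanish separately, which is \eqref{cond_delta}. Conversely, if both \eqref{cond_k} and \eqref{cond_delta} hold, then $(A) = (B) = 0$ and $\omega(\cG) = 0$, so $\cG$ is of leading order.

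The only point that is not purely a rewriting is that leading order genuinely corresponds to $\omega = 0$, rather than some strictly positive lower bound. For this it suffices to exhibit graphs saturating the bounds on any prescribed multi-set of bubble vertices. The second part of Lemma \ref{lemma_ineq_rho} already provides, for each $\ell$, a way of pairing the color-$0$ half-lines so that $\delta_\ell = \sum_b n_b \delta_\ell^{(b)}$, and one can moreover arrange the pairings so that each jacket is planar (for instance, by choosing a melonic gluing, as in the colored and MO cases). I expect this existence step to be the only mildly delicate point; the characterization itself is immediate from the positivity decomposition above.
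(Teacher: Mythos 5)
Your argument is correct and is essentially the paper's own proof: both read conditions \eqref{cond_k} and \eqref{cond_delta} directly off the decomposition of \eqref{omega_j} into the non-negative demigenus term and the color-by-color excess $\sum_{b \vert N_b > 2} n_b \, \delta_\ell^{(b)} - \delta_\ell \geq 0$ guaranteed by Lemma \ref{lemma_ineq_rho}, so that $\omega = 0$ forces each piece to vanish separately. For the non-emptiness of the degree-zero class (needed to know that leading order really means $\omega = 0$), the paper simply exhibits the two graphs of Figure \ref{ex_ampl}, which suffices and is cleaner than your sketched general construction (saturating the bounds with all jackets planar for an arbitrary multiset of bubbles) --- a stronger claim that you neither need nor actually prove.
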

\begin{proof}
The two terms appearing in the expression of the degree \eqref{omega_j} are positive or $0$. Hence they must both be $0$ when $\omega =0$, leading to equations \eqref{cond_k} and \eqref{cond_delta}. Furthermore, it is easy to check that the class of degree-$0$ graphs is non-empty: for instance, the two graphs shown in Figure \ref{ex_ampl} have $0$ degree. 
\end{proof}
At least melonic graphs in the sense of both colored and uncolored models are dominant. 
We will refer to these types of melons as respectively of type I and II. It is also easy to generate leading order graphs which are genuinely new, for example by contraction of an arbitrary number of tree lines in a $\vphi^4$ leading order graph of type I (this will create non-bipartite bubbles of valency higher than $4$ without changing the degree).

\bigskip

As announced in the introduction 
classifying leading order graphs 
lies outside the purpose of this paper.
In the remaining sections, we focus instead on the quartic real model.

\section{Quartic model, large N expansion}\label{sec:quartic}


The action of the quartic model writes:
\beq
S_N = \frac{1}{2} I_{\vcenter{\hbox{\includegraphics[scale=0.15]{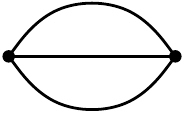}}}} + \frac{ \lambda_{1}}{4} \, I_{1} + \frac{\lambda_{2}}{12 \sqrt{N}} \, I_{2} \,,
\eeq
where $\lambda_1 , \lambda_2 \in \mathbb{R}$ and
\beq
I_1 := I_{\vcenter{\hbox{\includegraphics[scale=0.15]{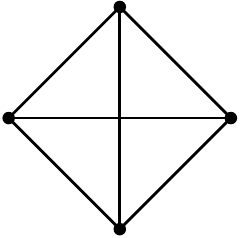}}}}\;,  \qquad I_2 := \sum_{\ell = 1}^3 I_{\vcenter{\hbox{\includegraphics[scale=0.15]{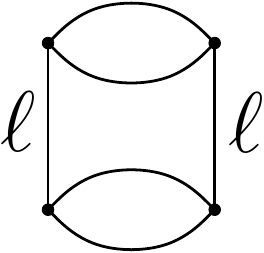}}}}\,.
\eeq
The interactions we allow are therefore: the tetrahedral interaction term, for which $\rho = 0$; and the three pillow invariants which have $\rho = \frac{1}{2}$, hence the scaling of $N^{-1/2}$ in front of $I_2$.
We will denote by $n_1$ the number of tetrahedral interactions in a given graph, and by $n_2$ the number of pillow vertices. Note that we have introduced normalization factors which take into account the symmetries of the respective interactions (this will of course facilitate the enumeration of graphs in the sequel).
We took into account:
\begin{itemize}
\item the number of automorphisms\footnote{That is, graph automorphisms which also preserve the coloring.} of each bubble (four in both cases), and 
\item an additional $1/3$ factor for the pillow interactions (this 
comes from the fact that we also require invariance under permutation of the color labels, and therefore use a single coupling constant for the three pillow interactions).
\end{itemize}
 
The invariant $I_2$ is an explicitly positive interaction, but $I_1$ is not. It is therefore likely that $S_N$ itself is unbounded from below, making the definition of the path integral questionable\footnote{We refer to \cite{freidel_louapre} in which this question is explored at length, though in a slightly different context.}. Our calculations should be interpreted as formal manipulations of power series in $\lambda_1$ and $\lambda_2$, and $N$. This being said, we will prove that the leading order graphs of this model are exactly the melonic graphs of colored and multi-orientable tensor models. The method of proof that we will use is itself a generalization of these cases. 

\subsection{Large N expansion; leading order}



In this model, we can define two types of melonic moves, called type I (resp. type II) contractions or insertions, which are nothing but the melonic moves already relevant in multi-orientable (resp. invariant) tensor models. See Figure \ref{melonic_moves}. 
This moves allow us to formalize the notion of melonic graph in this context.
\begin{definition} The family of vacuum \emph{melonic graphs} is the set of graphs generated by the two graphs shown in Figure \ref{ex_ampl}, and the melonic insertion operations of type I and II (see Figure \ref{melonic_moves}).
\end{definition}

It is also easily seen that the melonic moves conserve the degree.  

\begin{figure}[htp]
  \centering
  \subfloat[Melonic move of type I.]{\label{type1}\includegraphics[scale=0.8]{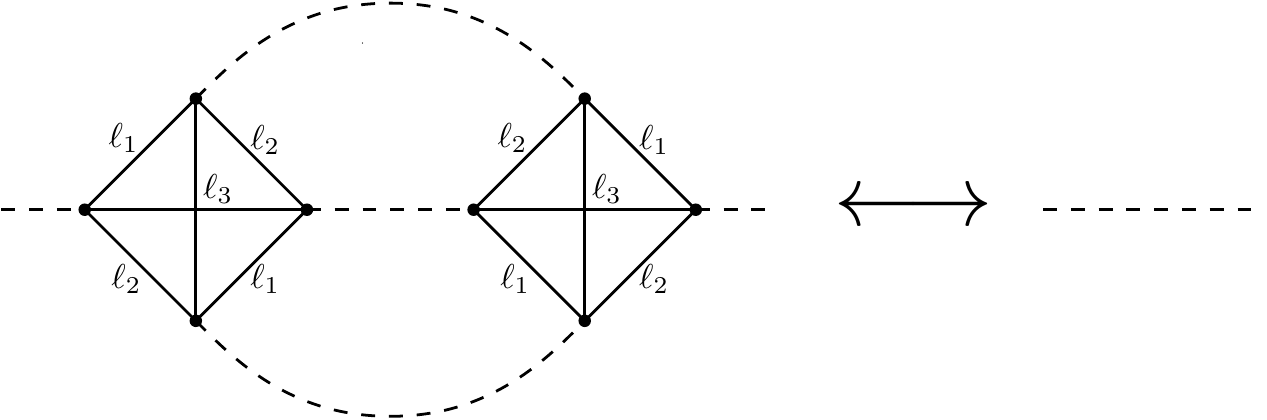}}
  \hspace{5pt}
  \subfloat[Melonic move of type II.]{\label{type2}\includegraphics[scale=0.8]{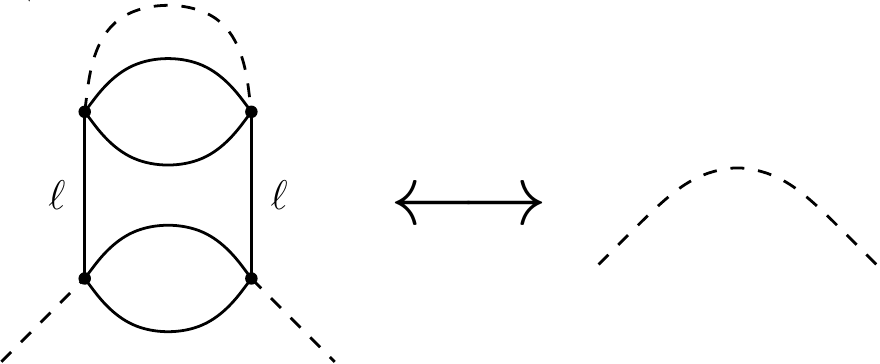}}
  \caption{Melonic moves in the quartic model: going from left to right is called a melonic contraction, while the reverse operation is a melonic insertion.}
  \label{melonic_moves}
\end{figure}

\begin{lemma}
 The degree $\omega$ is invariant under the melonic moves of type I and II.
\end{lemma}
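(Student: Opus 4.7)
The plan is to compute the variation $\Delta\omega$ of the degree produced by each melonic insertion directly from the explicit formula
\beq
\omega(\cG) = 3 + \frac{3}{2} L(\cG) - \sum_{b \vert N_b > 2} \left(\frac{3}{2} - \rho(b)\right) n_b(\cG) - F(\cG)\,,
\eeq
and to verify, case by case, that the increments of $L$, $n_b$ and $F$ cancel exactly. Since both sides of each move differ only in a local neighbourhood, the variation of each combinatorial quantity is a local count.

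First I would read off the elementary increments directly from Figure \ref{melonic_moves}. For the type I insertion (two tetrahedral bubbles joined internally by three propagator lines, with two external connectors replacing the contracted edge), one gets $\Delta n_1 = 2$ and $\Delta L = 4$; since $\rho = 0$ on the tetrahedral bubble, the combined line-and-bubble contribution to $\Delta\omega$ is $\frac{3}{2}\cdot 4 - \frac{3}{2}\cdot 2 = 3$, so that $\Delta\omega = 3 - \Delta F$. For the type II insertion (one pillow vertex with an internal self-loop and two external connectors), one gets $\Delta n_2 = 1$ and $\Delta L = 2$; since $\rho = \frac{1}{2}$ on the pillow, the analogous combination equals $\frac{3}{2}\cdot 2 - (\frac{3}{2}-\frac{1}{2})\cdot 1 = 2$, yielding $\Delta\omega = 2 - \Delta F$.

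Next I would reduce each case to a local face count. The claim to establish is that the three pre-existing colored face-strands of $\cG$ that pass through the contracted edge are re-routed one-to-one through the interior of the inserted melon, without any merging or splitting. Granting this, $\Delta F$ reduces to the number of \emph{new} faces entirely internal to the melon, and a direct inspection of the colored structures gives $3$ in the type I case (one new length-$2$ face per color) and $2$ in the type II case---the ``missing'' third internal face being absorbed into a re-routed external strand, which reflects the degenerate coloring of the pillow that the weight $\rho = \frac{1}{2}$ was precisely introduced to compensate.

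The hard part will be this last face-tracking, particularly for the pillow, where the colored strand associated to the ``distinguished'' color of the pillow traverses the vertex in a qualitatively different way from the other two. I expect it to amount to a short case analysis on the color label of the pillow, entirely analogous to the corresponding arguments already carried out for the colored, uncolored and multi-orientable models. Once this is dispatched, the two equalities $\Delta\omega = 0$ follow at once, and the lemma holds in both directions because the moves are defined up to reversal.
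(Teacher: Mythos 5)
Your proposal is correct and takes essentially the same route as the paper's proof: the paper simply records the increments $(\Delta L,\Delta n_1,\Delta n_2,\Delta F)=(4,2,0,3)$ for type I and $(2,0,1,2)$ for type II (stated in the contraction direction) and substitutes them into the degree formula, exactly as you do. Your face-tracking discussion just makes explicit the values $\Delta F=3$ and $\Delta F=2$ that the paper asserts by inspection, and your counts (three internal length-$2$ faces, one per color, for type I; two internal length-$1$ faces plus one re-routed strand of the pillow's color for type II) are the right ones.
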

\begin{proof}
 A melonic move of type I changes $L$ to $L-4$, $n_{1}$ to $n_{1} -2$, $F$ to $F-3$, and does not change $n_{2}$. Hence the new degree is
 \beq
 \omega - \frac{3}{2} 4 + 2 \left( \frac{3}{2} - 0 \right) + 3 = \omega\,.
 \eeq
 Similarly, a melonic move of type II changes $L$ to $L-2$, $n_{2}$ to $n_{2} -1$, $F$ to $F-2$, and does not change $n_{1}$. Hence the new degree is:
 \beq
 \omega - \frac{3}{2} 2 + \left( \frac{3}{2} - \frac{1}{2} \right) + 2 = \omega\,.
 \eeq
\end{proof}
Hence we have already proven that vacuum melonic graphs have degree $0$ and are therefore leading order. We will now prove that they are the only ones.

\

Let us first gather some facts about graphs $\cG$ with $n_{2}(\cG) = 0$. 
\begin{lemma}\label{lemma_length} Let $\cG$ be a vacuum graph such that $n_{2}(\cG) = 0$.
\begin{enumerate}[(i)]
 \item If $\cG$ has a face\footnote{We recall that the length of a face is defined as its number of color-$0$ lines.} of length $1$ then $\omega(\cG) \geq \frac{1}{2}$. 
 \item If $\cG$ has a face of length $3$, then $\omega(\cG) \geq \frac{1}{2}$.
\end{enumerate}
\end{lemma}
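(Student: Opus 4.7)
The plan is to prove both parts via a common graph reduction: starting from $\cG$, we remove one tetrahedron responsible for the short face, reconnect the dangling propagator ends appropriately, and verify that the resulting graph $\cG'$ satisfies $\omega(\cG') \leq \omega(\cG) - \tfrac{1}{2}$. Combined with the non-negativity $\omega(\cG') \geq 0$ from the preceding Proposition, this yields $\omega(\cG) \geq \tfrac{1}{2}$.

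For part (i), a length-$1$ face of color $\ell$ is encoded by a tadpole propagator $e_0$ joining two nodes $a, b$ of a tetrahedral bubble $B$ that are also connected by a color-$\ell$ edge of $B$. Let $c, d$ denote the two remaining nodes of $B$ and $c', d'$ the outside endpoints of the propagators incident at $c, d$. I would define $\cG'$ by deleting $B$ together with $e_0$, $(c,c')$, $(d,d')$, and adding a new propagator $(c', d')$. A color-by-color face count establishes $\Delta F = -1$: the length-$1$ face of color $\ell$ is destroyed; the other color-$\ell$ face, which passes through the edge $(c,d)$ of $B$, is shortened but preserved via the new propagator; and the unique color-$\ell'$ and color-$\ell''$ faces through $B$---each of which must necessarily contain both corresponding colored edges of $B$ together with the tadpole $e_0$---are rerouted through $(c',d')$ without change in count. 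Combined with $\Delta n_1 = -1$ and $\Delta L = -2$, this gives $\Delta \omega = -\tfrac{1}{2}$. Edge cases (the single-tetrahedron graph, a second tadpole on $B$, or disconnection of $\cG'$) are handled directly by explicit computation or by additivity of $\omega$ over components.

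For part (ii), the three color-$\ell$ edges of the length-$3$ face either include two belonging to the same tetrahedron $B$, or lie in three distinct tetrahedra. In the former situation, the face propagator joining those two $\ell$-edges is a tadpole on $B$ between nodes sharing a color-$\ell' \neq \ell$ edge of $B$, creating a length-$1$ face of color $\ell'$; part (i) then applies. In the latter, the three $\ell$-edges lie in tetrahedra $B_1, B_2, B_3$ cyclically connected by the three face-propagators. I would perform the analogous reduction: remove $B_1$, pair the two dangling face-propagator ends into a new propagator, and pair the two dangling ends of the other propagators at $B_1$ into a second new propagator. A more delicate face analysis---now allowing for a possible merging of color-$\ell'$ or color-$\ell''$ faces during the reconnection---shows $\Delta F \geq -1$, hence $\Delta \omega \leq -\tfrac{1}{2}$. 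The principal technical obstacle is precisely this last step: establishing $\Delta F \geq -1$ in this second case of part (ii) requires tracking the four half-edges at the nodes of $B_1$ through the reconnection in order to exclude configurations producing multiple simultaneous face mergings.
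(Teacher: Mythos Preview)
Your treatment of part (i) is essentially the paper's argument: contract the single-vertex $2$-point subgraph created by the tadpole, check $\Delta L = -2$, $\Delta n_1 = -1$, $\Delta F = -1$, hence $\Delta\omega = -\tfrac12$, and handle the infinity graph separately. This is fine.

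Part (ii), however, has a genuine gap, and the paper's route is different. In your three-tetrahedron case you remove $B_1$ and reconnect into two new propagators, then assert $\Delta F \geq -1$. But track the colour-$\ell'$ and colour-$\ell''$ strands separately. With nodes $a,b,c,d$ in $B_1$ (the $\ell$-edge $(a,b)$ lying on the length-$3$ face, propagators at $a,b$ going to $B_3,B_2$, and propagators at $c,d$ going to $c',d'$), the old internal $\ell'$-routing is $\{B_3\text{-end}\leftrightarrow c',\; B_2\text{-end}\leftrightarrow d'\}$ while the new one is $\{B_3\text{-end}\leftrightarrow B_2\text{-end},\; c'\leftrightarrow d'\}$; for $\ell''$ the old routing is $\{B_3\text{-end}\leftrightarrow d',\; B_2\text{-end}\leftrightarrow c'\}$ and the new one is again $\{B_3\text{-end}\leftrightarrow B_2\text{-end},\; c'\leftrightarrow d'\}$. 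Depending on the external pairings (which are determined by the rest of $\cG$ and are independent for the two colours), each of $\ell'$ and $\ell''$ can independently lose one face. Nothing you have written excludes the simultaneous double merge $\Delta F_{\ell'} = \Delta F_{\ell''} = -1$, and in that case $\Delta F = -2$, giving $\Delta\omega = \tfrac32(-2) - \tfrac32(-1) - (-2) = +\tfrac12$, which is useless for a lower bound on $\omega(\cG)$.

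The paper sidesteps this entirely with a topological argument. If $\cG$ has a length-$3$ face of colour $\ell_1$ and no length-$1$ face, the three propagators of that face, viewed in the jacket $J_{\ell_1}$, form a ribbon with three half-twists; hence $J_{\ell_1}$ is non-orientable and has demigenus $k \geq 1$. Since for purely tetrahedral graphs $\delta_\ell^{(b)} = 0$ forces $\delta_\ell = 0$, formula~(\ref{omega_j}) reduces to $\omega = \tfrac12\sum_\ell k(J_\ell)$, and one jacket with $k\geq 1$ already gives $\omega\geq\tfrac12$. This parity/orientability observation is the missing idea; your reduction strategy cannot see it, because the obstruction is global (an odd number of twists along a cycle) rather than local to a single vertex removal.
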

\begin{proof}
 (i) $\cG$ is either: a single-vertex graph with two tadpole lines i.e. the \emph{infinity graph} of Figure \ref{infty_graph}, in which case we explicitly check that $\omega(\cG) = \frac{1}{2}$; or it contains a non-trivial $2$-point graph with a single vertex (Figure \ref{2point_41}). The contraction of this $2$-point graph changes $L$ to $L-2$, $n_{1}$ to $n_{1} - 1$ and $F$ to $F-1$. Hence this yields a new graph $\cG'$ with degree $\omega(\cG') = \omega(\cG) - \frac{1}{2}$. The positivity of $\omega$ immediately implies that $\omega(\cG) \geq \frac{1}{2}$.
 
 (ii) The only way for $\cG$ to have a face of length $3$ without having also a face of length $1$ is as pictured in Figure \ref{proof_lemma_faces}. The jacket $J_{\ell_1}$ of this graph contains a ribbon with three twists and is therefore not orientable. This implies that at least one jacket, being not orientable, has a half-integer genus, and hence $\omega (\cG) \geq \frac{1}{2}$.
\end{proof}

\begin{figure}[b]
  \centering
  \subfloat[Infinity graph.]{\label{infty_graph}\includegraphics[scale=0.8]{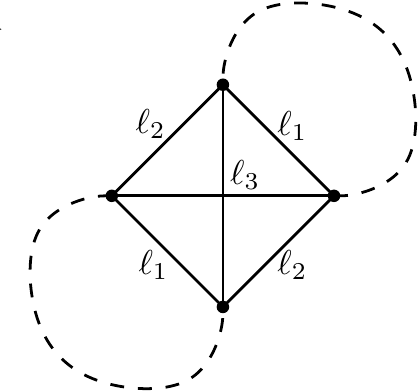}}
  \hspace{5pt}
  \subfloat[A $2$-point graph with a face of length $1$.]{\label{2point_41}\includegraphics[scale=0.8]{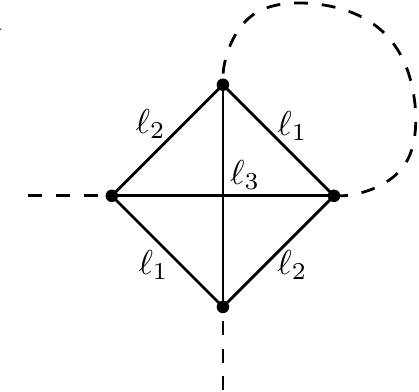}}
  \hspace{5pt}
  \subfloat[Face of length $3$ whose jacket yields a twisted ribbon graph.]{\label{proof_lemma_faces}\includegraphics[scale=0.8]{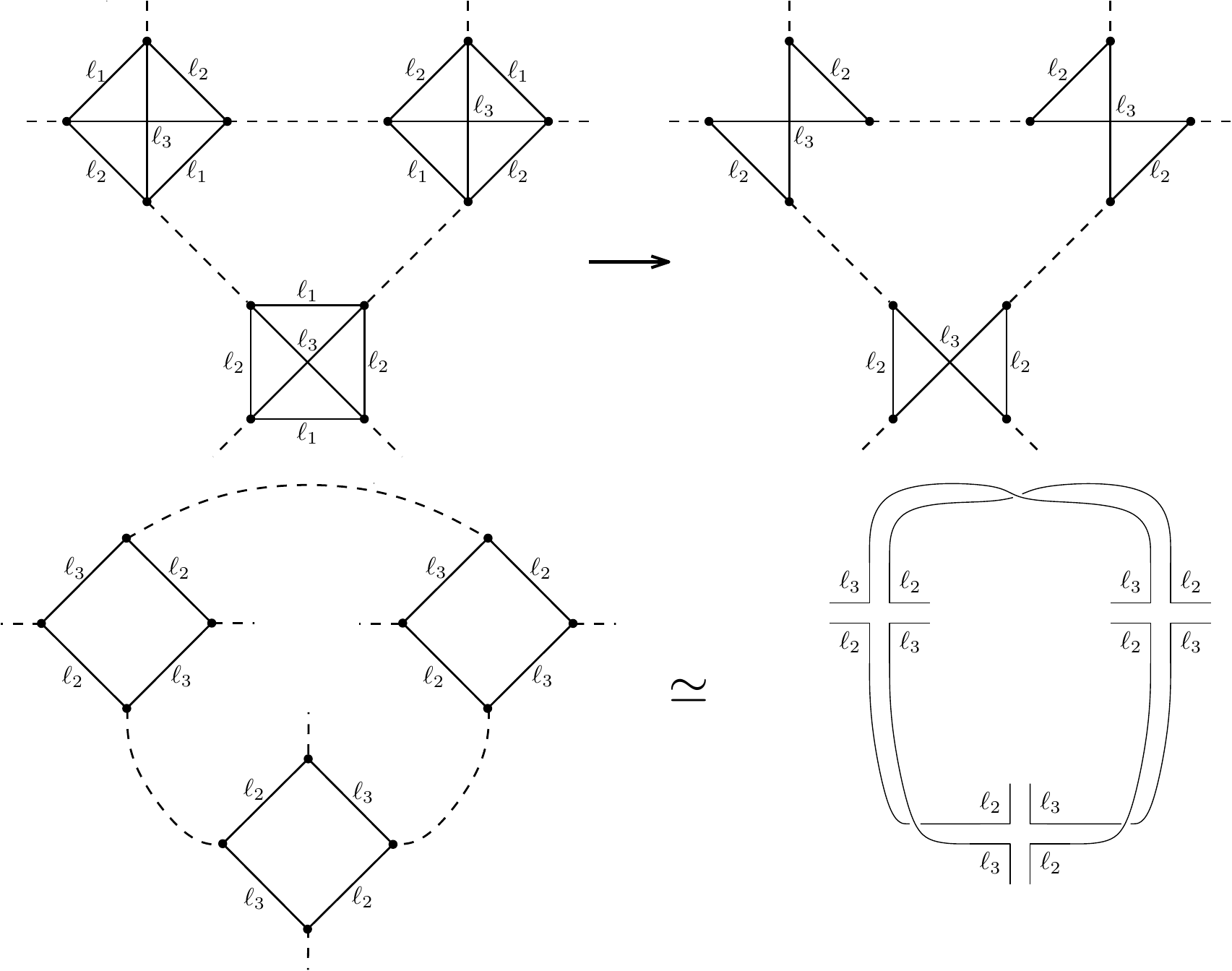}}\caption{Proof of Lemma \ref{lemma_length}.}
  \label{proof_lemma_length}
\end{figure}

\begin{lemma}\label{l_f2} Let $\cG$ be a vacuum graph such that $n_{2}(\cG) = 0$.
If $\omega(\cG) = 0$, then $\cG$ has at least $6$ faces of length $2$.
\end{lemma}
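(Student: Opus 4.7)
The plan is to turn the conditions $\omega(\cG)=0$ and $n_2(\cG)=0$ into a rigid Euler-characteristic count for the three jackets, and then exploit the two resulting identities (total face count, total face length) together with Lemma \ref{lemma_length} to pinch the number of length-$2$ faces from below. Since every interaction is tetrahedral, one checks directly that $\rho(b_{\mathrm{tet}})=0$ and $\delta_\ell^{(b_{\mathrm{tet}})}=0$ for each $\ell$; the characterization of leading order graphs in the proposition preceding this lemma then forces $\delta_\ell=0$ and $k(J_\ell)=0$ for each $\ell$. In particular every jacket $J_\ell$ is a connected planar ribbon graph, i.e. a $2$-sphere.

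Writing $n_1$ for the number of tetrahedral vertices, I would then apply Euler's formula. One has $v(J_\ell)=4n_1$ nodes and $e(J_\ell)=6n_1$ edges ($L=2n_1$ color-$0$ lines plus $2n_1$ lines for each of the two remaining colors), so planarity gives
\beq
f(J_\ell)=2+2n_1\,.
\eeq
The core identity is the decomposition of $\sum_\ell f(J_\ell)$ into faces of $\cG$ and bubble faces: a $(0,c)$-cycle of $\cG$ is a face of the two jackets $J_\ell$ with $\ell\neq c$, while each tetrahedral bubble contributes exactly one bicolor $(c,c')$-cycle to the unique jacket $J_\ell$ with $\ell\notin\{c,c'\}$, giving $3$ such ``bubble faces'' per tetrahedron. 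Summing over $\ell$ yields
\beq
2F + 3 n_1 \;=\; \sum_\ell f(J_\ell) \;=\; 6+6n_1\,,
\eeq
hence $F = 3 + \frac{3n_1}{2}$.

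To finish, I would use that each color-$0$ line lies in exactly three faces of $\cG$ (one per color $\ell\in\{1,2,3\}$), giving $\sum_{k\geq 1} k\,F_k = 3L = 6n_1$, where $F_k$ counts faces of length $k$. Lemma \ref{lemma_length} forces $F_1 = F_3 = 0$. Subtracting twice the total face count from the length identity produces the telescoping relation
\beq
\sum_{k\geq 4}(k-2)\,F_k \;=\; 6n_1 - 2F \;=\; 3n_1 - 6\,,
\eeq
so that $\sum_{k\geq 4} F_k \leq \frac{3n_1-6}{2}$ since each coefficient $(k-2)\geq 2$. Therefore
\beq
F_2 \;=\; F - \sum_{k\geq 4} F_k \;\geq\; 3 + \frac{3n_1}{2} - \frac{3n_1-6}{2} \;=\; 6\,,
\eeq
as claimed. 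The one step that requires care is the decomposition of $\sum_\ell f(J_\ell)$: overlooking the $(c,c')$-bubble faces would give the wrong count $2F$ instead of $2F+3n_1$, and this is the only delicate bookkeeping step in an otherwise direct Euler calculation.
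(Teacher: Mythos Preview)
Your proof is correct and follows essentially the same strategy as the paper: establish the two identities $\sum_p F_p = F$ and $\sum_p p\,F_p = 3L$, invoke $F_1 = F_3 = 0$ from Lemma~\ref{lemma_length}, and take a linear combination to bound $F_2$ from below. The only difference is that the paper reads off $F = 3 + \tfrac{3}{2}L - \tfrac{3}{2}V$ directly from the definition of $\omega$ (setting $\omega=0$ with $\rho=0$), which makes your jacket Euler computation and the $2F+3n_1$ decomposition unnecessary; it then takes the combination $4\sum_p F_p - \sum_p pF_p$ rather than your $\sum_p pF_p - 2\sum_p F_p$, arriving at the slightly sharper identity $F_2 = 6 + \tfrac{1}{2}\sum_{p\geq 4}(p-4)F_p$.
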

\begin{proof}
 Let $F_p$ be the number of faces of length $p$ in $\cG$, and $V$ the number of vertices. We have:
 \beq
 \sum_p F_p = F  = 3 + \frac{3}{2} L - \frac{3}{2} V \,, \qquad \sum_p p F_p = 3 L\,,
 \eeq
 where the condition $\omega(\cG) = 0$ was used in the first equation. From Lemma \ref{lemma_length}, we know that $F_1 = F_3 = 0$. Subtracting the second equation to four times the first therefore yields:
 \beq
 2 F_2 + \sum_{p \geq 4} (4 - p) F_p = 12 + 3 L - 6 V\,,
 \eeq
 and since moreover $L = 2 V$ we conclude that
 \beq
 F_2 = 6 + \frac{1}{2} \sum_{p \geq 4} (p - 4) F_p \geq 6\,.
 \eeq
\end{proof}

We also remark the following:
\begin{lemma}\label{l_n42}
 Let $\cG$ be a vacuum leading order graph. If $n_{2}(\cG) \neq 0$, then $\cG$ is of the form shown in Figure \ref{split}.
\end{lemma}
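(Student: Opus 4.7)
The plan is to pick a pillow vertex of $\mathcal G$ and analyse the local structure of color-$0$ lines around it with the help of the leading-order conditions \eqref{cond_k}--\eqref{cond_delta}. Fix a pillow $v$ of color $\ell$ in $\mathcal G$ and label its four nodes $a,b,c,d$ so that the two internal color-$\ell$ edges of $v$ are $a$-$c$ and $b$-$d$, while the pairs $\{a,b\}$ and $\{c,d\}$ are each joined by the two non-$\ell$ colors. Each node carries a single color-$0$ half-line leading into the rest of $\mathcal G$.

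The core step is to apply the leading-order equality $\delta_\ell = \sum_{b\,\vert\,N_b > 2} n_b\,\delta_\ell^{(b)}$ at color $\ell$. Since $\delta_\ell^{(v)} = 1$, the pillow $v$ contributes exactly $1$ to the right-hand side. According to the saturation analysis in the proof of Lemma \ref{lemma_ineq_rho}, equality requires pairing color-$0$ half-edges only within the same connected component of the $\{1,2,3\}\setminus\{\ell\}$ colored subgraph, without any external color-$0$ path merging components that were separated by the deletion of color $\ell$. Applied to $v$, this forces the $\{a,b\}$-side and the $\{c,d\}$-side of $v$ to lie in distinct connected components of $J_\ell$. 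Equivalently, no color-$0$ path in $\mathcal G$ can join $\{a,b\}$ to $\{c,d\}$ except through the two internal color-$\ell$ edges of $v$ itself.

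Translating this back to $\mathcal G$, the two color-$0$ half-lines at $a$ and $b$ must close up within one $2$-point subgraph $\mathcal G_1$, while those at $c$ and $d$ close up within another $2$-point subgraph $\mathcal G_2$, with $\mathcal G_1$ and $\mathcal G_2$ disjoint from each other except through $v$. Hence $\mathcal G$ is precisely the pillow $v$ with $\mathcal G_1$ attached on one side and $\mathcal G_2$ on the other, which is the split form of Figure \ref{split}. A degree-additivity check then forces $\mathcal G_1$ and $\mathcal G_2$ to be leading-order themselves, which will be useful for the subsequent enumeration. The main obstacle is formalising the saturation argument: one must show rigorously that any external path reconnecting the two halves of $v$ either produces an extra jacket component (a strict inequality in Lemma \ref{lemma_ineq_rho} at color $\ell$) or a twist in some jacket ($k(J_{\ell'}^{(i)}) > 0$ for some $\ell',i$), in either case contradicting \eqref{cond_k}--\eqref{cond_delta}. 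Minor secondary points, such as excluding tadpole color-$0$ lines at $v$ or an empty $\mathcal G_i$, reduce to Lemmas \ref{lemma_length} and \ref{l_f2} on faces of length $1$ and $3$.
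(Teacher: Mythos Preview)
Your approach is essentially the paper's: use the leading-order equality $\delta_\ell = \sum_{b} n_b \, \delta_\ell^{(b)}$ to force the jacket $J_\ell$ to disconnect, then observe that such a disconnection can only occur across a pillow of color $\ell$. The paper's proof is three lines and does not fix a pillow in advance; it simply notes $\delta_\ell \geq 1$ and that any split of $J_\ell$ must sit at a color-$\ell$ pillow, which yields Figure~\ref{split}.

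Two remarks on your version. First, by fixing a specific pillow $v$ at the outset you are implicitly claiming that \emph{every} color-$\ell$ pillow realises the split of Figure~\ref{split}. This is true, but your ``equivalently'' step is a non sequitur: the two halves of $v$ lying in distinct $J_\ell$-components does not by itself exclude a path in $\cG$ that reconnects them through \emph{another} color-$\ell$ pillow. The mechanism you propose for the ``main obstacle'' (an extra jacket component, or a twist) is not what actually rules this out. The clean closure is a counting argument: if there are $m$ pillows of color $\ell$, the saturated equality gives $\delta_\ell = m$, so $J_\ell$ has $m+1$ components; the auxiliary graph with these components as vertices and the color-$\ell$ pillows as edges is connected (since $\cG$ is) with $m+1$ vertices and $m$ edges, hence a tree, and every edge of a tree is a bridge. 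Thus every such pillow indeed separates $\cG$.

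Second, your appeal to Lemmas~\ref{lemma_length} and~\ref{l_f2} at the end is misplaced: both assume $n_2(\cG)=0$, precisely the hypothesis you have negated here. No such exclusion is needed anyway --- a direct color-$0$ line joining the two nodes on one side of $v$ is not a pathology but an elementary type-II melon, and is treated as such in Proposition~\ref{propo_l}.
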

\begin{proof}
 One necessarily has a bubble $b$ and a color $\ell$ such that $\delta_\ell^{(b)} \geq 1$. Hence since $\cG$ is leading order, $\delta_\ell = \underset{b \vert N_b > 2}{\sum} n_b \, \delta_\ell^{(b)} \geq 1$. A splitting of a jacket of color $\ell$ in two connected components can only happen at a pillow vertex with color $\ell$, hence $\cG$ is of the form shown in Figure \ref{split}.  
\end{proof}

\begin{figure}[ht]
  \centering
 	\includegraphics[scale=0.8]{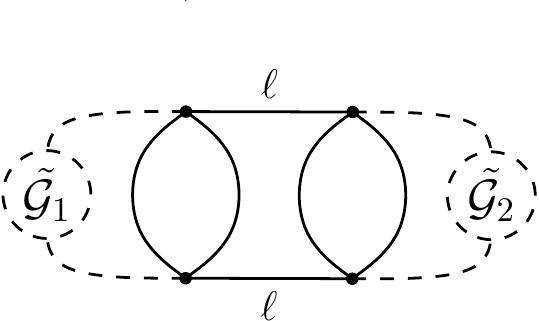}     
  \caption{Structure of a leading order graph containing a pillow bubble.}\label{split}
\end{figure}

\

Let us define a $2$-point leading order graph as a $2$-point graph which closes into a vacuum leading order graph.

\begin{proposition}\label{propo_l}
 Any $2$-point leading order graph contains a type I or type II elementary melon.
\end{proposition}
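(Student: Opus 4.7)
The plan is to split the argument according to whether $\cG$ contains any pillow bubble, and in each case appeal to the structural lemmas already proved for vacuum leading order graphs.

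\textbf{Case A ($n_2(\cG)\ge 1$).} Let $\bar\cG$ be the vacuum closure of $\cG$ obtained by joining its two external legs with a single propagator. Since $\bar\cG$ is a vacuum leading order graph, Lemma~\ref{l_n42} ensures that every pillow of $\bar\cG$ with distinguishing color $\ell$ splits it into two nested 2-point subgraphs, one on each of its two sides, as in Figure~\ref{split}. Iterating this decomposition yields a binary tree of nested pillows; since $\cG$ is finite, this tree has a leaf. The leaf is either (i) a pillow at which at least one of the two nested 2-point subgraphs is empty, so that a pair of pillow legs closes directly into a self-loop and realizes a type II elementary melon as in Figure~\ref{type2}; or (ii) a pillow both of whose sides contain only tetrahedral vertices, in which case each side forms a 2-point leading order subgraph with no pillow and Case B below produces a type I elementary melon inside it (hence inside $\cG$).

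\textbf{Case B ($n_2(\cG)=0$).} All vertices of $\cG$ are tetrahedral. Applying Lemma~\ref{l_f2} to $\bar\cG$ gives at least six faces of length 2. A single color-$0$ line belongs to exactly three faces of $\bar\cG$ (one for each color label $\ell\in\{1,2,3\}$), so the capping propagator accounts for at most three of these length-2 faces, leaving at least three of them entirely inside $\cG$. By Lemma~\ref{lemma_length}(i), any such bi-gon has two distinct tetrahedral endpoints $v_1,v_2$ joined by a pair of parallel propagators. The remaining step, which is the main step and the main obstacle, is to promote this bi-gon into a full type I elementary melon by producing a third propagator between $v_1$ and $v_2$.

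For this extension I would perform a local analysis of the jacket ribbons incident at $v_1,v_2$: planarity of every jacket component at leading order (equation~\eqref{cond_k}), together with the rigid cyclic matching of colored edges at a tetrahedral node, should leave essentially a unique admissible completion of the bi-gon into planar jackets, namely a third direct propagator from $v_1$ to $v_2$ contributing one further length-2 face per color. Any alternative routing of the two outer legs of $v_1$ must either generate a face of length 1 (forbidden by Lemma~\ref{lemma_length}(i)) or a handle in some jacket (contradicting~\eqref{cond_k}). The non-orientable possibilities allowed in the $\mO(N)$ framework, namely jackets of half-integer demigenus, enlarge the case analysis compared to the bipartite unitary setting and are the key combinatorial subtlety to control.
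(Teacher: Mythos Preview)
Your Case A tracks the paper's $n_2 \neq 0$ argument (which the paper packages as an induction on $p=\lfloor n_1/2+n_2\rfloor$ rather than a tree of nested pillows, but the content is the same). The real problem is Case B.

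Your plan there is to take a length-$2$ face between two tetrahedra $v_1,v_2$ and argue, from planarity of the jackets, that a third $v_1$--$v_2$ propagator must exist. This is not true in general. Start from the elementary vacuum type I melon (two tetrahedra joined by four propagators) and perform two type I insertions on two distinct propagators; the resulting leading-order vacuum graph with $n_2=0$ has the original pair of vertices now joined by exactly two propagators, and that bi-gon does \emph{not} extend to a type I melon, even though type I melons are present elsewhere in the graph. So a purely local jacket analysis at a chosen bi-gon cannot force the third propagator, and your sketch (``alternative routing forces a face of length $1$ or a handle'') fails on this example.

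The paper avoids this obstacle by a different mechanism. It introduces a local move (Figure~\ref{move_4142}) which replaces the two tetrahedra and their bi-gon by a single pillow vertex, preserving every face and hence the degree, while leaving $p$ unchanged. One is thereby back in the $n_2\neq 0$ case: the new pillow sits, by Lemma~\ref{l_n42}, between two $2$-point subgraphs $\tilde\cG_1,\tilde\cG_2$ which are untouched by the move and hence are already subgraphs of the original $\cG$; one of them is a nonempty $2$-point leading-order graph with strictly smaller $p$, and the inductive hypothesis produces a melon inside it, hence inside $\cG$. The move is the missing idea: it trades the hard local question (extend \emph{this} bi-gon) for the easy global one (find a melon in a strictly smaller piece).
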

\begin{proof}
 By induction on $p = \lfloor \frac{n_{1}}{2} + n_{2} \rfloor$.
 
 $p=1$: The only two possible graphs are shown on the left side of Figures \ref{type1} and \ref{type2}. The fact that there are no other possible graphs with $n_{1} = 2$ and $n_{2} = 0$ is easily deduced from the fact that all faces must have length $2$ in this case (Lemma \ref{l_f2}).
 
 $p \geq 2$: If $n_{2} \neq 0$, choose a pillow vertex in $\cG$. If it does not directly provide an elementary melon of type II, then Lemma \ref{l_n42} ensures that the graph is of the form shown in Figure \ref{split}, where the condition $\omega = 0$ imposes that $\tilde{\cG_1} \neq \emptyset$ is a $2$-point leading-order graph with strictly lower $p$. By the induction hypothesis it therefore itself contains an elementary melon of type I or II. If $n_{2} = 0$, by Lemma \ref{l_f2} the graph is of the form shown in Figure \ref{form_lo}. If this does not already provide an elementary melon of type I, then we can perform the move shown in Figure \ref{move_4142}, which as is easily proved conserves the face structure and the degree. Now $n_{2} \neq 0$ with $p$ unchanged, and we can run the previous argument again to conclude. 
\end{proof}

\begin{figure}[htp]
  \centering
  \subfloat[Structure of a leading order graph with $n_{4,2}=0$.]{\label{form_lo}\includegraphics[scale=0.8]{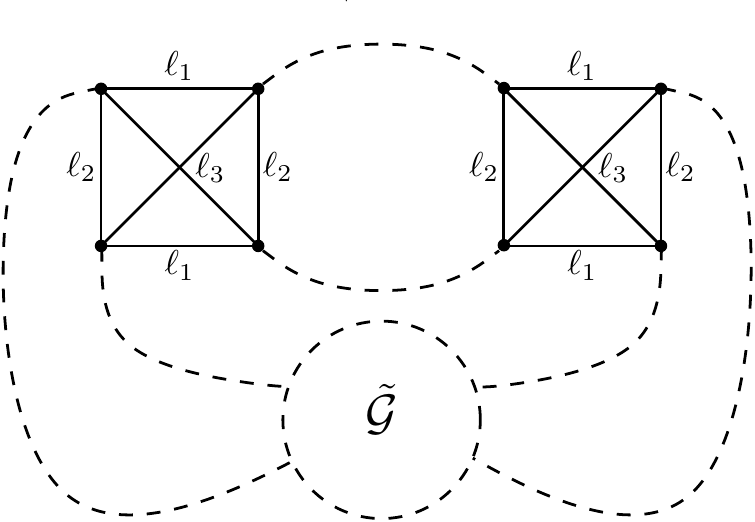}}
  \hspace{5pt}
  \subfloat[A move leaving the degree invariant.]{\label{move_4142}\includegraphics[scale=0.8]{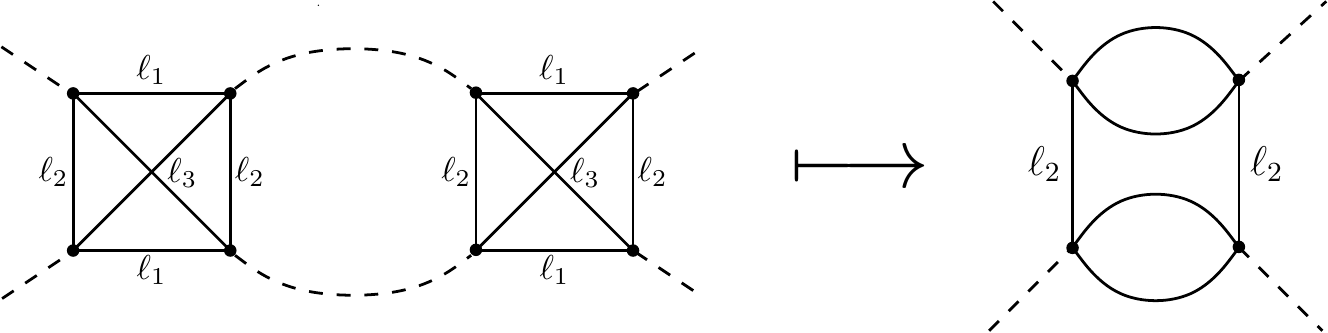}}
  \caption{Proof of Proposition \ref{propo_l}.}
  \label{proof_propo}
\end{figure}

\

This concludes our characterization of the leading order sector of the quartic model. 
\begin{proposition}
The vacuum leading order graphs of the quartic model are the vacuum melonic graphs.
\end{proposition}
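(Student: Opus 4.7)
The plan is to argue the two inclusions separately. The inclusion ``vacuum melonic $\Rightarrow$ vacuum leading order'' is already in place: the two generators in Figure~\ref{ex_ampl} have degree $0$, and the lemma immediately preceding Proposition~\ref{propo_l} shows that the melonic moves of types I and II leave $\omega$ invariant, so that iterating melonic insertions from the two base graphs produces only degree-$0$ graphs. It therefore suffices to prove the converse: every vacuum graph $\cG$ with $\omega(\cG)=0$ is melonic.

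I would proceed by induction on the total number of interaction vertices $V = n_1(\cG) + n_2(\cG)$. For the base case $V \leq 2$, a direct enumeration recovers exactly the two generators of Figure~\ref{ex_ampl}: when $n_2 = 0$, Lemma~\ref{l_f2} forces every face to have length $2$, which pins down the tetrahedral generator, while when $n_2 \geq 1$, Lemma~\ref{l_n42} immediately singles out the pillow-based generator. For the inductive step with $V \geq 3$, I would open a color-$0$ line of $\cG$ to produce a two-point graph $\hat\cG$. Opening a dashed line does not alter the bubble content and preserves the value of $\omega$ (the cycle passing through the cut line simply becomes a pair of external half-edges), so $\hat\cG$ qualifies as a two-point leading-order graph in the sense of Proposition~\ref{propo_l}. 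That proposition supplies an elementary melon of type I or II inside $\hat\cG$; contracting it yields a smaller vacuum graph $\cG'$ with $\omega(\cG')=0$ by the melonic-invariance lemma, and the inductive hypothesis gives that $\cG'$ is melonic, so $\cG$ is obtained from $\cG'$ by a melonic insertion and is melonic as well.

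The delicate point, which I expect to be the main obstacle, is that the elementary melon produced in $\hat\cG$ might straddle the cut line, in which case the corresponding contraction is not literally defined on $\cG$. I would bypass this by choosing the cut line judiciously: if some vertex $v$ of $\cG$ does not itself belong to any elementary melon of $\cG$, then cutting an edge incident to $v$ forces Proposition~\ref{propo_l} to return a melon lying entirely inside $\cG$, which can be contracted unambiguously. In the degenerate case where \emph{every} vertex is already part of an elementary melon, $\cG$ is already manifestly a chain of elementary melons, and is therefore melonic by direct inspection. This case analysis closes the induction and completes the proof.
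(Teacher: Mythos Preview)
Your approach is the paper's approach: iterate Proposition~\ref{propo_l} to strip melons until only a base graph remains. The paper's proof is a two-line invocation of that proposition; you spell out the induction explicitly, which is fine.

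The ``delicate point'' you flag, however, is not a genuine obstacle, and the workaround you propose is unnecessary. Suppose you cut a color-$0$ edge $e$ with endpoints $a,b$ to form $\hat\cG$, and Proposition~\ref{propo_l} returns an elementary melon $M\subset\hat\cG$. The two boundary dashed half-edges of $M$ fall into one of three cases. If both are the external legs of $\hat\cG$, then $M=\hat\cG$ (by connectedness) and $\cG$ is already one of the two base graphs of Figure~\ref{ex_ampl}. If neither is an external leg, $M$ sits entirely in the interior and is manifestly a $2$-point subgraph of $\cG$. If exactly one is an external leg, say the one at $a$, then $b\notin M$ (otherwise the external leg at $b$ would have to be the second boundary half-edge of $M$, contradicting the case hypothesis); hence in $\cG$ the edge $e$ exits $M$ at $a$, and $M$ is again a $2$-point subgraph of $\cG$ with the same internal structure. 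In every non-trivial case the melonic contraction is therefore well-defined directly on $\cG$, with no need to choose the cut carefully.

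Your fallback for the degenerate case is also weaker than it needs to be: ``every vertex belongs to some elementary melon'' does not obviously force $\cG$ to be a \emph{chain} of melons. But it trivially implies that $\cG$ contains at least one elementary melon, which is already enough to contract and invoke the inductive hypothesis --- so the case is in fact the easy one, not the hard one.
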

\begin{proof}
By Proposition \ref{propo_l}, any leading order graph can be reduced to one of the two graphs of Figure \ref{ex_ampl} by successive contractions of melons. Hence such a graph is melonic.
\end{proof}

\subsection{Next-to-leading order}

We now focus on the next-to-leading order (NLO) graphs, which are characterized by $\omega = \frac{1}{2}$. From Eq. (\ref{omega_j}), we infer that graphs with degree $\omega = \frac{1}{2}$ must have a single non-trivial jacket of demigenus $k=1$ and should also verify condition (\ref{cond_delta}). Therefore Lemma \ref{l_n42} also holds for next-to-leading order vacuum graphs. 

One can check that there are exactly 3 single-vertex graphs with $\omega=\frac{1}{2}$: there are the so-called 'infinity' graphs represented in Figure \ref{infty_graph} (there is one such graph for each value of $\ell_1$). We readily obtain an infinite family of NLO graphs by insertion of non-trivial melonic $2$-point graphs, which as we have seen do not change the degree. In order to determine whether this family exhausts the set of NLO graphs, we follow \cite{RT} and introduce the notion of \emph{core graph}.

\begin{definition}
A \emph{core graph} is a vacuum graph with no melonic $2$-point subgraph.
\end{definition}

The question now is whether there exists more NLO core graphs than the three infinity graphs. Let us first prove 
the following lemma:

\begin{lemma}\label{NLO_faces}
Let $\cG$ be a NLO core graph. Then:
\begin{enumerate}[(i)]
\item $n_{2} (\cG) = 0$;
\item if $\cG$ is made of more than 1 vertex, then all its faces have length higher or equal to 3. Moreover, all the faces of length 3 must have the same color as the non-planar jacket of $\cG$.
\end{enumerate}
\end{lemma}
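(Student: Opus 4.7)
I prove the two items of the lemma by contradiction, leveraging the LO characterization (Proposition \ref{propo_l}) and the rigidity of the NLO degree constraint $\omega(\cG)=1/2$. A preliminary observation I will use throughout: in expression \eqref{omega_j} the demigenus contribution $\frac12\sum_{\ell,i}k(J_\ell^{(i)})$ is a nonnegative multiple of $1/2$, while the remaining term $\sum_b n_b\sum_\ell\delta_\ell^{(b)} - \sum_\ell\delta_\ell$ is a nonnegative integer (by Lemma \ref{lemma_ineq_rho}); hence $\omega(\cG)=1/2$ forces exactly one jacket component of demigenus $1$ and the saturation of condition \eqref{cond_delta}.

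For part (i), I suppose $n_2(\cG)\geq 1$ and pick a pillow vertex $p$. The saturation of \eqref{cond_delta} makes the proof of Lemma \ref{l_n42} go through verbatim, so $\cG$ decomposes around $p$ as in Figure \ref{split}: a pillow with two 2-point subgraphs $\tilde{\cG}_1, \tilde{\cG}_2$ attached on either side. The degree is additive across this split, giving $\omega(\tilde{\cG}_1)+\omega(\tilde{\cG}_2)=1/2$, so at least one side — say $\tilde{\cG}_1$ — is LO. If $\tilde{\cG}_1$ is trivial (a pair of parallel propagator segments closing $p$ on itself), then $p$ together with that closure is already an elementary type II melon of $\cG$; otherwise Proposition \ref{propo_l} places an elementary melon inside $\tilde{\cG}_1$. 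Either way I obtain a melonic 2-point subgraph of $\cG$, contradicting the core hypothesis; hence $n_2(\cG)=0$.

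For part (ii), I assume $V(\cG)\geq 2$, so (by (i)) only tetrahedral bubbles appear, and exclude short faces in three steps. If $\cG$ had a face of length 1, by Lemma \ref{lemma_length}(i) and since $\cG$ is not an infinity graph, it would contain the single-vertex 2-point subgraph of Figure \ref{2point_41}; its contraction produces a vacuum LO graph $\cG'$ that is melonic and (having no pillow) contains a type I elementary melon, and this melon can be chosen disjoint from the contracted vertex except in a handful of very-small cases checked by inspection — yielding a melonic 2-point subgraph of $\cG$ and contradicting core-ness. If $\cG$ had a face of length 2 of color $\ell$, this would correspond to two parallel color-$0$ lines between tetrahedral vertices $v_1, v_2$ whose color-$\ell$ internal edges match across; inspecting the remaining two color-$0$ half-edges at $v_1, v_2$ and exploiting the saturated jacket identities of the NLO regime, I would argue that the third color-$0$ line between $v_1$ and $v_2$ is forced, producing a type I elementary melon — again contradicting core-ness. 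Finally, let $\ell_0$ be the color of the unique non-planar jacket $J_{\ell_0}$: a face of color $\ell$ lies (as a ribbon-graph face) in the two jackets $J_{\ell_1}, J_{\ell_2}$ with $\{\ell_1,\ell_2\}=\{1,2,3\}\setminus\{\ell\}$, so a length-3 face of color $\ell\neq\ell_0$ would sit as a triangle in the non-planar jacket $J_{\ell_0}$; a local Euler-type count on $J_{\ell_0}$, combined with the absence of length-$1$ and length-$2$ faces already established and the core property, then forces either $\omega(\cG)>1/2$ or the existence of a type I elementary melon incident to the triangle — a contradiction. Hence length-$3$ faces must be of color $\ell_0$.

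The main obstacles are the length-$2$ and length-$3$ analyses: in both cases one must carefully track the color structure at the incident tetrahedral bubbles and combine it with jacket genus and face-count data to extract a concrete type I elementary melon (or else a numerical obstruction to $\omega=1/2$). The arithmetic bookkeeping at those steps is where the combinatorial control is most delicate, in particular because a short face does not immediately produce a 2-point subgraph and one has to promote a 4-point local configuration into a genuine melonic 2-point subgraph using the NLO rigidity.
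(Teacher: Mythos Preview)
Your part (i) is fine and matches the paper's additivity argument across the split of Figure~\ref{split}. In part (ii), however, the length-$2$ and length-$3$ steps contain genuine gaps. For length $2$, the claim that ``saturated jacket identities of the NLO regime'' force an additional color-$0$ line between $v_1$ and $v_2$ is not supported: one non-planar jacket together with the saturation of \eqref{cond_delta} imposes no such local propagator constraint, and there is no reason the would-be melon should sit precisely at $v_1,v_2$. The paper instead uses a tool you do not invoke --- the degree-preserving move of Figure~\ref{move_4142}, which turns the length-$2$ configuration into a pillow vertex; the resulting graph is still NLO but has $n_2\neq 0$, hence by part (i) is not core, and the melonic $2$-point subgraph it must contain sits in the portion of the graph untouched by the move, so it was already present in $\cG$.

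For length $3$ you are looking at the wrong jacket. You place a length-$3$ face of color $\ell\neq\ell_0$ as a triangle inside $J_{\ell_0}$ and appeal to an unspecified ``local Euler-type count''. The actual obstruction, already extracted in the proof of Lemma~\ref{lemma_length}(ii), lives in $J_\ell$: the configuration of Figure~\ref{proof_lemma_faces} shows that a length-$3$ face of color $\ell$ produces a thrice-twisted ribbon in $J_\ell$, forcing that jacket to be non-orientable. Since $\cG$ has exactly one non-planar jacket, $\ell=\ell_0$ follows in one line, with no Euler count needed. (Your length-$1$ argument is also unnecessarily fragile: rather than contracting to a vacuum $\cG'$ and asserting that some elementary melon of $\cG'$ avoids the new line ``except in a handful of very-small cases'', apply Proposition~\ref{propo_l} directly to the complementary $2$-point subgraph $\tilde\cG\subset\cG$, or --- as the paper does --- use $\omega(\cG)=\omega(\tilde\cG)+\tfrac12$ together with core-ness to obtain $\omega(\cG)\geq 1$.)
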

\begin{proof}
(i) Assuming  $n_{2} (\cG) \neq 0$, condition (\ref{cond_delta}) imposes that $\cG$ has the structure shown in Figure \ref{split}. Since $\cG$ is a core graph, $\omega(\tilde{\cG_1}) \geq \frac{1}{2}$ and $\omega(\tilde{\cG_2}) \geq \frac{1}{2}$. But one can check (from the definition of $\omega$) that $\omega(\cG) = \omega(\cG_1) + \omega(\cG_2)$, which yields $\omega(\cG) \geq 1$ in contradiction with the NLO character of $\cG$.

(ii) Since there is no pillow vertex in $\cG$, a face of length $1$ has to be of the form shown in Figure \ref{2point_41}. Let us call $\tilde{\cG}$ the non-trivial $2$-point subgraph which closes this figure. By hypothesis, $\omega(\tilde{\cG}) \geq \frac{1}{2}$. One can also verify that $\omega(\cG) = \omega(\tilde{\cG}) + \frac{1}{2}$, which is inconsistent with $\cG$ being NLO. Therefore $\cG$ has no face of length $1$. 
If $\cG$ had a face of length $2$, it would have the structure of Figure \ref{form_lo}. Performing the move of Figure \ref{move_4142}, which does not change the degree, would lead to a NLO graph with $n_{2} \neq 0$. This graph would therefore have to contain a $2$-point melonic subgraph, and so would $\cG$, but this cannot be since $\cG$ is a core graph.
Finally, just like before, the only way for $\cG$ to have a face of length $3$ without having also a face of length $1$ is as pictured in Figure \ref{proof_lemma_faces}. The existence of such a face of color $\ell_1$ imposes that the jacket of color $\ell_1$ is non-orientable. Since $\cG$ has exactly one non-planar jacket, all the faces of length $3$ must have color $\ell_1$. 
\end{proof}

This is sufficient to show that the infinity graphs are the only NLO core graphs in this model.

\begin{proposition}
The only NLO core graphs of the quartic model are the infinity graphs of Figure \ref{infty_graph}.
\end{proposition}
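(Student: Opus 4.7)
The three infinity graphs of Figure \ref{infty_graph} are the only NLO single-vertex configurations (as noted at the start of this subsection) and are automatically core. Hence the task reduces to ruling out multi-vertex NLO core graphs. Lemma \ref{NLO_faces} has already done the structural work for any such hypothetical $\cG$: $n_2(\cG) = 0$, $F_1(\cG) = F_2(\cG) = 0$, and every face of length $3$ carries the single color $\ell_1$ associated with the non-planar jacket.

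My plan is to derive a contradiction by running the face-length counting of Lemma \ref{l_f2}, but restricted to one of the two \emph{planar} colors, say $\ell_2 \in \{1,2,3\} \setminus \{\ell_1\}$. Denoting by $F_p^{(\ell_2)}$ the number of color-$\ell_2$ faces of length $p$, two identities feed the argument. First, each color-$0$ line belongs to exactly one cycle of the $\{0,\ell_2\}$-subgraph, so
\[
\sum_{p} p\, F_p^{(\ell_2)} \;=\; L \;=\; 2 n_1.
\]
Second, $n_2(\cG) = 0$ combined with equation (\ref{cond_delta}) forces $\delta_{\ell_2} = 0$, so $J_{\ell_2}$ is connected; plugging its planarity ($k(J_{\ell_2}) = 0$) into the Euler-type identity $f(J_{\ell_2}) = 2 - v(J_{\ell_2}) + e(J_{\ell_2}) - k(J_{\ell_2})$ together with $f(J_{\ell_2}) = F - F_{\ell_2}$ yields $F_{\ell_2} = (n_1+1)/2$, using $F = (3n_1 + 5)/2$ read off from the degree formula at $\omega = 1/2$.

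By Lemma \ref{NLO_faces}(ii), color-$\ell_2$ faces of length $1$, $2$ or $3$ are all forbidden, so the first identity collapses into $2n_1 \geq 4 F_{\ell_2} = 2(n_1 + 1)$, a contradiction. This excludes multi-vertex NLO core graphs and finishes the proof. The only non-routine step is really Lemma \ref{NLO_faces}(ii) itself, which confines all length-$3$ faces to the single non-planar color — without that asymmetry between $\ell_1$ and the planar colors the inequality would just barely fail, so it is the interplay between topology (one non-orientable jacket) and combinatorics (face lengths) that makes the argument sharp.
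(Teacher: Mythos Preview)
Your proof is correct and follows essentially the same route as the paper's: both use Lemma \ref{NLO_faces} to reduce to $n_2=0$ with all planar-color faces of length $\geq 4$, invoke planarity (and connectedness) of a jacket $J_{\ell_2}$ to count the $\ell_2$-faces, and close with the inequality $L \geq 4 F_{\ell_2}$. The only cosmetic difference is that you substitute $\omega = 1/2$ early to get $F_{\ell_2} = (n_1+1)/2$ and reach a numerical contradiction, whereas the paper keeps $\omega$ symbolic and derives $\omega \geq 1$; your version is also slightly more explicit than the paper in justifying that $J_{\ell_2}$ is connected (via condition (\ref{cond_delta}) and $\delta_\ell^{(b)}=0$ for the tetrahedron).
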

\begin{proof}
Let us assume that $\cG$ is a NLO core graph with more than 1 vertex and look for a contradiction. By Lemma \ref{NLO_faces} (i), $\cG$ cannot have any pillow vertex, and therefore its degree can be expressed as $\omega = 3 + \frac{3}{2} V - F$. We know that $\cG$ has three jackets, two of which (say $J_1$ and $J_2$) are planar and the last one being of demigenus $k = 1$. We also know by Lemma \ref{NLO_faces} (ii) that the faces of length $3$ have the same color as the non-planar jacket, namely the color 3. The total number of faces $F$ can be split into the $f(J_1 )$ faces of $J_1$ (which are of color $2$ and $3$) and the $F_1$ faces of color $1$: $F = f(J_1) + F_1$. $J_1$ being planar, Euler's relation implies $f(J_1) = v(J_1) + 2 = V + 2$, hence:
\begin{equation}
\omega(\cG) = 3 + \frac{3}{2} V - ( V + 2 + F_1 ) = 1 + \frac{1}{2} V - F_1\,.
\end{equation}
The $F_1$ faces of color $1$ have length higher or equal to 4, and each line of $\cG$ contains exactly one ribbon line of color 1, therefore:
\begin{equation}
2 V = L = \sum_{p \geq 4} p F_1^{(p)} \geq 4 F_1\,,
\end{equation}
where $F_1^{(p)}$ is the number of faces of length $p$ and color $1$. One concludes that
\begin{equation}
\omega(\cG) \geq 1 + \frac{1}{2} \times 2 F_1 - F_1 = 1\,,
\end{equation}
which contradicts the fact that $\cG$ is an NLO Feynman graph.
\end{proof}

\section{General quartic model: critical behaviour}\label{sec:critical}

The leading-order connected and one particle irreducible Green functions are proportional to a product of Kronecker delta functions. Let us call $G_{LO} (g, \mu)$ (resp. $\Sigma_0 (g, \mu)$) the proportionality factor of the connected (resp. one particle irreducible) Green function, in terms of the following parametrization of the coupling constants:
\begin{equation}
\label{parametrizare}
g := {\lambda_{1}}^2\,, \qquad \mu := - \frac{\lambda_{2}}{{\lambda_{1}}^2}\,. 
\end{equation}
In this way, the variable $g$ will allow to keep track of the total number of elementary melonic insertions, while $\mu$ will count the number of elementary melonic insertions of type II.

\subsection{Explicit counting of melonic graphs}\label{sec:counting}

The melonic graphs of our model can be represented by unlabelled colored trees. More precisely, with the weight we have chosen in the action, $G_{LO}$ writes:
\begin{equation}\label{def_G0}
G_{LO} (g, \mu) = \sum_{p, q \in \mathbb{N}} \, C_{p,q} \, g^{p+q} \mu^q\,, 
\end{equation}
where $C_{p,q}$ is the number of melonic $2$-point graphs with $p$ type I melons and $q$ type II melons, up to local color permutations of type II vertices\footnote{This is because and the reason why we have divided the corresponding interaction term by an extra factor $3$ in the action.}. Such melonic graphs can be recursively constructed by successive insertions of type I and type II elementary melons. They can therefore be represented by abstract trees with edge color labels, which record the location and type of the successive melonic insertions. One can for instance adopt the convention of Fig. \ref{tree_mapping}. Each equivalent class of melonic $2$-point graphs (up to color relabelling at the type II vertices) is represented by a rooted colored tree, an \emph{admissible coloring} of the edges of a tree being as follows: the $4$ lines outgoing (the notion of outgoing being defined with respect to the root) from a coordination $5$ vertex are labeled with integers from $0$ to $4$, the $2$ lines outgoing from a coordination $3$ vertex have labels $0$ and $1$, and finally the unique edge outgoing from the root vertex has color $0$.
In this manner, $C_{p,q}$ counts the number of \emph{colored} rooted trees with $p$ vertices of coordination $5$, $q$ vertices of coordination $3$, and $3p + q + 2$ leaves (including the root vertex). Note also that a coloring is equivalent to a choice of local orientation around each vertex of the tree, hence $C_{p,q}$ equivalently counts the number of \emph{binary--quaternary plane trees} with $p$ quaternary and $q$ binary vertices. See Figure \ref{ex_trees} for an example of a tree representation of a melonic $2$-point function.

\begin{figure}[ht]
  \centering
 	\includegraphics[scale=0.6]{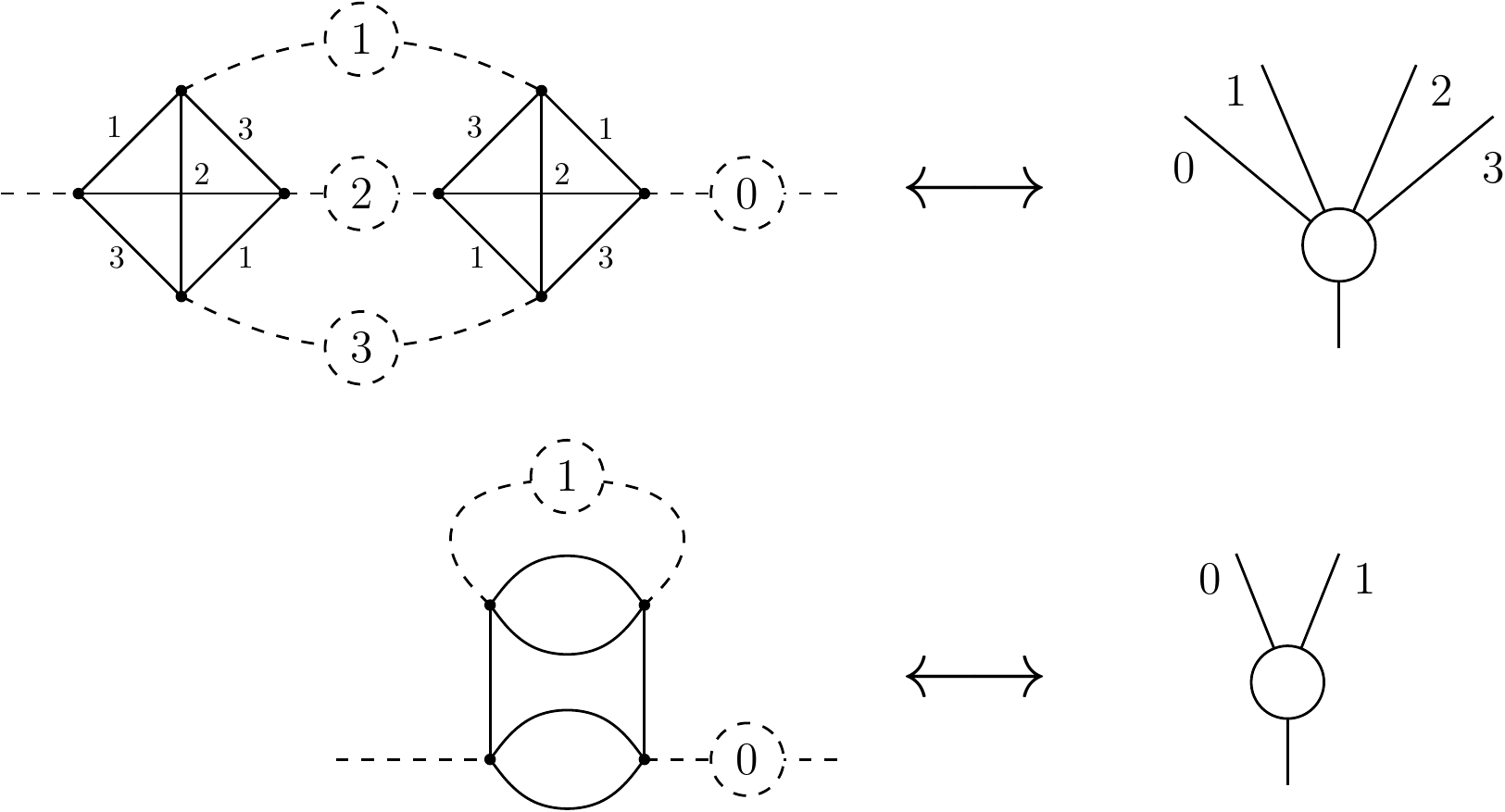}     
  \caption{Correspondence between elementary melons and plane tree vertices. Each labeled circular dashed region on the left-hand-side corresponds to the open leg with the same label on the right-hand-side.}\label{tree_mapping}
\end{figure}

\begin{figure}[ht]
  \centering
 	\includegraphics[scale=0.6]{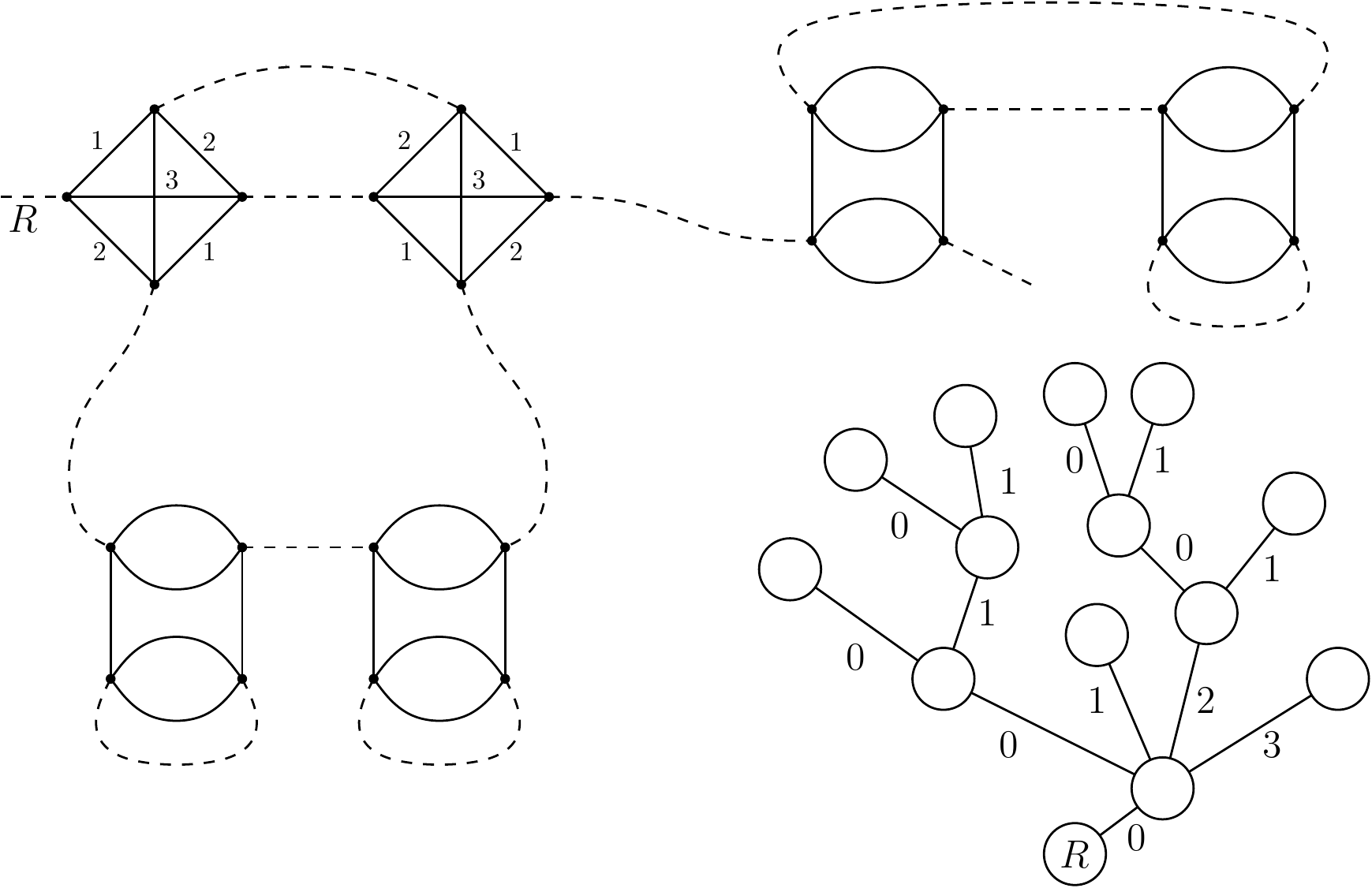}     
  \caption{A $2$-point melonic graph with root external leg $R$ and its tree representation.}\label{ex_trees}
\end{figure}

\

By Cayley's theorem, the number of labelled trees with $p$ vertices of valency $5$, $q$ vertices of valency $3$, $3p + q + 2$ leaves, and therefore a total number of $4p + 2q + 2$ vertices, is
\begin{equation}
\frac{(4p + 2q)!}{(4!)^p \, 2^q}\,.
\end{equation}
The outgoing edges of each coordination $5$ vertex admit $4!$ distinct colorings or orientations, while the outgoing edges of coordination $3$ vertices admit only $2$ distinct colorings. This gives a total multiplicative factor of $(4!)^p \, 2^q$. Since our trees are furthermore unlabelled and rooted, one should divide by the number of possible permutations of the vertices except for the root: namely, one should divide by $p!$ (for the coordination 5 vertices), $q!$ (for the coordination $3$ vertices), and  $(3p + q + 1)!$ (for all the coordination $1$ vertices but the root). This yields:
\begin{equation}\label{enum_trees}
C_{p,q} = \frac{[4 p + 2 q]!}{p! \, q! \, (3 p+ q + 1)! }\,.
\end{equation}

\

\noindent\emph{Remark.} Tree structures can naturally be enumerated by means of quantum field theory techniques \cite{malek, continuum_TM, andrea}. In the present situation one can for instance define the partition function
\begin{equation}
Z( g, \mu, J) = \int \extd \mu_C \, \exp\left( \vphi J + g \, \vphi \psi^4 - \mu \, g \, \vphi \psi^2 \right)\,,
\end{equation}
where the covariance $C$ of the Gaussian measure $\mu_C$ is defined by: 
\beq
C(\vphi, \vphi) =  C( \psi , \psi ) = 0 \,, \qquad C( \vphi, \psi )  = 1\,.
\eeq
The inspection of the perturbative expansion of this auxiliary field theory shows that the melonic Green function $G_{LO}$ evaluates as the connected expectation value:
\begin{equation}
G_{LO} (g , \mu ) = \frac{\langle \psi_0 \rangle_c (g, \mu , J=1)}{Z( g, \mu , J = 1 ) }\,.
\end{equation}
One can easily compute $Z(g , \mu , J)$ and $\langle \psi_0 \rangle_c (g, \mu , J)$ perturbatively in $\mu$ and $J$, and hence reduce formula (\ref{enum_trees}) to the computation of a product of two formal power series. Given the simplicity of the previous proof we will not give more details here, the interested reader is referred to \cite{continuum_TM} for a similar calculation. 

\

In order to obtain a first crude understanding of the divergence structure of $G_{LO}$, one may resort to the following asymptotics of the coefficients $C_{p,q}$.
\begin{proposition} The coefficients $C_{p,q}$ have the following asymptotic behaviour:
\begin{enumerate}[(i)]
\item For any $q_0 \in \mathbb{N}$:
\beq
C_{p, q_0 } \underset{p \to + \infty }{\sim} \frac{1}{3} \sqrt{\frac{2}{3 \pi}} \frac{1}{q_0 !} \left( \frac{16}{3} \right)^{q_0 } \, p^{ q_0 - 3/2} \left( \frac{4^4}{3^3} \right)^p
\eeq
\item For any $p_0 \in \mathbb{N}$:
\beq
C_{ p_0 , q } \underset{q \to + \infty }{\sim} \frac{1}{\sqrt{\pi}} \frac{1}{p_0 !} 16^{p_0 } \, q^{ p_0 - 3/2} \, 4^q
\eeq
\end{enumerate}
\end{proposition}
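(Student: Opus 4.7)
The plan is to prove both asymptotics by a direct application of Stirling's formula
$$n! = \sqrt{2\pi n}\,(n/e)^n\bigl(1 + O(1/n)\bigr)$$
to each of the three factorials appearing in
$$C_{p,q} = \frac{(4p+2q)!}{p!\,q!\,(3p+q+1)!}\,,$$
and then to carefully track the combinatorial prefactor, the exponential and the power-law pieces. Each statement is a separate one-variable asymptotic, so the two items can be handled in parallel but independently.

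For item (i), I fix $q_0$ and send $p\to +\infty$. Stirling applies to the three factorials $(4p+2q_0)!$, $p!$ and $(3p+q_0+1)!$, whose arguments all grow linearly in $p$, while $q_0!$ is treated as a constant. The $e$-exponents combine to $-(4p+2q_0)+p+(3p+q_0+1) = -q_0+1$. The three square-root prefactors combine into $\sqrt{2\pi(4p+2q_0)}/\bigl(\sqrt{2\pi p}\,\sqrt{2\pi(3p+q_0+1)}\bigr)\sim \sqrt{2/(3\pi p)}$. The main work is to expand the polynomial pieces by extracting the dominant scale:
$$(4p+2q_0)^{4p+2q_0} = (4p)^{4p+2q_0}\,(1 + q_0/(2p))^{4p+2q_0}\sim (4p)^{4p+2q_0}\,e^{2q_0}\,,$$
and similarly $(3p+q_0+1)^{3p+q_0+1}\sim (3p)^{3p+q_0+1} e^{q_0+1}$. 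Collecting everything, the $e$-factors cancel exactly against the exponential piece, the power of $p$ becomes $q_0-1$ from the polynomial part and then $q_0-3/2$ after multiplying by the $p^{-1/2}$ coming from the square-root prefactor, and the constants reorganize into $\tfrac{1}{3}(16/3)^{q_0}(4^4/3^3)^p/q_0!$.

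For item (ii), I fix $p_0$ and let $q\to +\infty$. The strategy is identical, but now the dominant scale is $q$: I write
$$(4p_0+2q)^{4p_0+2q} = (2q)^{4p_0+2q}\,(1+2p_0/q)^{4p_0+2q}\sim (2q)^{4p_0+2q}\,e^{4p_0}\,,$$
and $(3p_0+q+1)^{3p_0+q+1}\sim q^{3p_0+q+1}\,e^{3p_0+1}$. Again the $e$-exponents cancel against the $(4p_0+2q)-q-(3p_0+q+1) = p_0-1$ contribution, and the square-root prefactor degenerates as $\sqrt{2\pi\cdot 2q}/\bigl(\sqrt{2\pi q}\sqrt{2\pi q}\bigr) = 1/\sqrt{\pi q}$. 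The powers of $q$ collapse to $q^{p_0-1}$ from the polynomial piece and then $q^{p_0-3/2}$ after the prefactor, producing $16^{p_0}4^q q^{p_0-3/2}/(p_0!\sqrt{\pi})$.

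There is no genuine obstacle: the proof is a bookkeeping exercise, so the only real risk is an arithmetic slip. The step that deserves the most care is the expansion of $(4p+2q_0)^{4p+2q_0}$ and $(3p+q_0+1)^{3p+q_0+1}$ in item (i) (and the analogous expansions for (ii)), because a term of the form $(1+a/n)^{bn+c}$ must be expanded to the order giving the correct finite exponential factor $e^{ab}$, and those factors must cancel precisely against the accumulated $e$-exponents coming from Stirling. Once this cancellation is observed, the remaining polynomial and constant pieces regroup without difficulty into the claimed expressions.
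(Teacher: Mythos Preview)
Your proposal is correct and follows exactly the approach indicated in the paper, which simply states that these expressions are direct consequences of Stirling's formula. Your detailed bookkeeping of the $e$-exponents, square-root prefactors, and the expansions $(1+a/n)^{bn+c}\sim e^{ab}$ is accurate and yields the stated asymptotics without error.
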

\begin{proof}
These expressions are direct consequences of Stirling's formula.
\end{proof}

As a consequence of Fubini's theorem, if the right-hand-side of equation (\ref{def_G0}) is absolutely convergent then the partial sums over $p$ and $q$ respectively are absolutely convergent and therefore:
\beq
\vert g \vert < \frac{3^3}{4^4} \qquad \mathrm{and} \qquad \vert \mu g \vert < \frac{1}{4}\,.
\eeq
This already proves the existence of a critical regime for $G_{LO}$, but this is not enough to locate the singularities or compute the critical exponents. Indeed, $G_{LO}$ can be rewritten as:
\begin{equation}
G_{LO} (g, \mu) = \sum_{n \in \mathbb{N}} \alpha_n (\mu) \, g^n \,,  
\end{equation}
with
\begin{eqnarray} 
\alpha_n (\mu) &=& 
\sum_{q=0}^n \frac{\mu^q }{ 3 n - 2 q + 1} {{n }\choose{q}} {{4n - 2 q}\choose{n }} 
\end{eqnarray}
and it is the asymptotic behaviour of $\alpha_n (\mu)$ which determines the critical behaviour of $G_{LO}$. Rather than directly evaluating this complicated sum, we will adopt an alternative strategy based on the analysis of the analytic properties of an algebraic equation for $G_{LO}$. Interestingly, not only this method does not require any evaluation of $\alpha_n (\mu)$ (or any prior knowledge of the coefficients $C_{p,q}$), it actually yields the asymptotic formula for $\alpha_n (\mu)$ we are missing as a bonus. 


\subsection{Diagrammatic equations, leading and next-to-leading order}

We know from the standard relation between the connected and one particle irreducible Green functions that:
\begin{equation}
G_{LO} = \frac{1}{1 - \Sigma_0}\,.
\end{equation}
From the structure of melonic graphs, we can furthermore infer the relation depicted in Figure \ref{schwinger_dyson_lo}. Carefully taking combinatorial factors and signs into account, this leads to: 
\begin{equation}
\Sigma_0 = {\lambda_{1}}^2 \, {G_{LO}}^3 - \lambda_{2} \, G_{LO} = g \, {G_{LO}}^3 + g \, \mu \, G_{LO}.
\end{equation}
We can therefore deduce the following equation for $G_{LO}$ alone:
\begin{equation}
\label{cheia}
G_{LO} = 1 + g \, {G_{LO}}^2 \left( {G_{LO}}^2 + \mu \right) \,.
\end{equation}

\begin{figure}[ht]
  \centering
 	\includegraphics[scale=0.6]{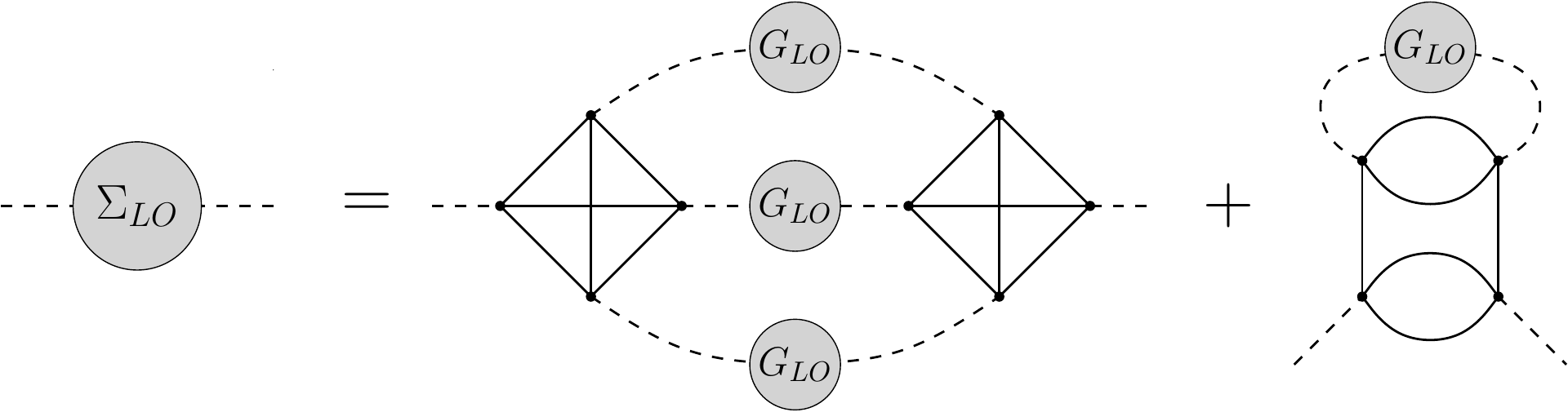}     
  \caption{Diagrammatic equation for the $2$-point function at leading order.}\label{schwinger_dyson_lo}
\end{figure}

\


Let us now investigate the NLO behaviour. The graphs contributing to the connected NLO two-point function can be obtained from the NLO vacuum graph by cutting one of the graph's internal lines. The result of this cutting process thus follows, from a combinatorial point of view, from the characterization of the NLO vacuum graphs from the previous section. One has three distinct types of graphs, depending on the type of edge that is cut in a given graph $\cG$:
\begin{enumerate}
\item one can cut an edge of a melon of type I;
\item one can cut an edge of a melon of type II;
\item one can finally cut an edge associated to a tadpole line in the core graph of $\cG$.
\end{enumerate}
\begin{figure}[ht]
  \centering
 	\includegraphics[scale=0.6]{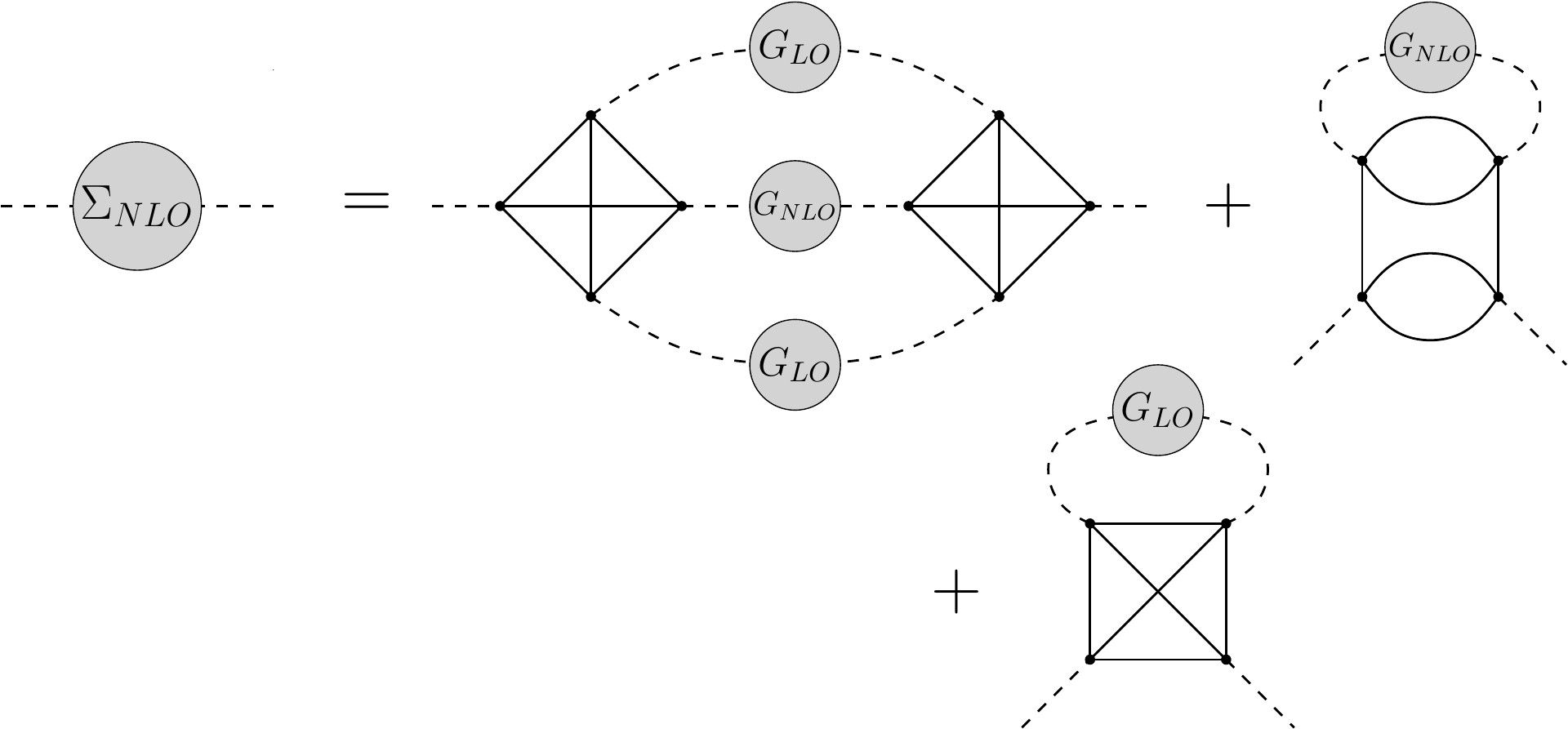}     
  \caption{Diagrammatic equation for the $2$-point function at next-to-leading order.}\label{schwinger_dyson_nlo}
\end{figure}
The diagrammatic equation obtained in this way is depicted in Figure \ref{schwinger_dyson_nlo}, and the analytic equation associated writes:
\begin{align}\label{nlo}
\Sigma_{NLO} &= 3 {\lambda_1}^2 \, {G_{LO}}^2 \, G_{NLO} - \lambda_2 \, G_{NLO} - \lambda_1 \, G_{LO} \\ 
&= 3 g \, {G_{LO}}^2 \, G_{NLO} + g \, \mu \, G_{NLO} - \sqrt{g} \, G_{LO}.
\end{align}
The combinatorial factor three in the first term above comes from the fact that one has three distinct choices for the NLO $2$-point insertion, choices corresponding to the three distinct colors of the colored edges. 

\subsection{Singularity analysis}

Let us assume for the moment that $\mu \geq 0$ and $g > 0$, so that all the series we consider have positive coefficients. By Pringsheim's theorem\footnote{Pringsheim's theorem stipulates that a power series $\underset{n\in \mathbb{N}}{\sum} a_n z^n$ with positive coefficients $a_n$ and radius of convergence $R > 0$ has a singularity at $z = R$. See e.g. \cite{flajolet}.}, we know that $G_{LO}$ is singular on the boundary of its domain of convergence.
Using standard methods of analytic combinatorics, we are furthermore able to determine how $G_{LO}$ behaves close to the critical curve\footnote{$G_{LO}$ is a generating function of tree-like objects, which lead to square-root singularities on very general grounds \cite{flajolet}.}.

\begin{proposition}\label{propo_critical} For any $\mu \geq 0$, define the critical value
\beq\label{def_gc}
g_c (\mu) = \frac{G_c (\mu) - 1}{{G_c (\mu)}^2 \left( {G_c (\mu)}^2 + \mu \right)}\,,
\eeq
where $G_c (\mu)$ is the unique real solution of the polynomial equation
\beq
- 3 \, x^3 + 4 \, x^2 - \mu \, x + 2 \mu = 0\,.
\eeq 
The (adherence of the) domain of convergence of the series defining $G_{LO}$ is $\{ (g , \mu) \in \mathbb{R}_{+} \times \mathbb{R} \,, \,   \vert g \vert \leq g_c ( \vert \mu \vert ) \}$. 
Moreover, for any $\mu \geq 0$, there exists a constant $K ( \mu ) > 0$ such that:
\begin{equation}\label{critical}
G_{LO} (g, \mu) \underset{g \to {g_c (\mu)}^{-}}{=} G_c (\mu ) - K (\mu) \sqrt{ 1 - \frac{g}{g_c (\mu)}} \left( 1 + \cO ( 1 - \frac{g}{g_c (\mu)} ) \right) \,.
\end{equation}
\end{proposition}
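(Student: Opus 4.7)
The plan is to treat the algebraic equation \eqref{cheia} as an implicit equation for $G_{LO}$ and apply standard techniques of analytic combinatorics. Set
\[
P(G, g; \mu) := G - 1 - g \, G^2 (G^2 + \mu).
\]
Since $P(1, 0; \mu) = 0$ and $\partial_G P(1, 0; \mu) = 1 \neq 0$, the implicit function theorem yields a unique analytic branch $G_{LO}(g, \mu)$ on a neighbourhood of $g = 0$ with $G_{LO}(0, \mu) = 1$, and this branch extends analytically in $g$ as long as $\partial_G P$ does not vanish along it. For $\mu \geq 0$ the series \eqref{def_G0} has non-negative coefficients, so by Pringsheim's theorem the dominant singularity $g_c(\mu)$ lies on the positive real axis, and must coincide with the first $g > 0$ at which $\partial_G P\bigl(G_{LO}(g,\mu), g; \mu\bigr) = 0$.

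The critical point is therefore characterized by the system $P(G_c, g_c; \mu) = 0$ and $\partial_G P(G_c, g_c; \mu) = 0$. Solving the first for $g$ gives precisely \eqref{def_gc}, while the second reads $g_c = [2 G_c(2 G_c^2 + \mu)]^{-1}$; equating these two expressions and clearing denominators yields exactly $-3G_c^3 + 4 G_c^2 - \mu G_c + 2\mu = 0$. I would establish uniqueness of the relevant real root by a short sign/derivative analysis: for $\mu \geq 0$ the cubic $p(G) := -3G^3 + 4G^2 - \mu G + 2\mu$ satisfies $p(0) = 2\mu \geq 0$, $p(1) = 1 + \mu > 0$ and $p(G) \to -\infty$, while the discriminant of $p$ can be computed to be $-12\mu^3 - 524\mu^2 - 512\mu \leq 0$ for $\mu \geq 0$, so $p$ has a single real root (for $\mu > 0$), identified by continuity with the explicit value $G_c(0) = 4/3$.

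For the asymptotic \eqref{critical} I would invoke the smooth implicit-function singularity schema (e.g.\ Theorem~VII.3 of Flajolet--Sedgewick). At $(G_c, g_c)$ one checks $\partial_G^2 P = -g_c (12 G_c^2 + 2\mu) \neq 0$ and $\partial_g P = -G_c^2(G_c^2 + \mu) \neq 0$, and the Weierstrass-type local expansion
\[
\tfrac{1}{2} \partial_G^2 P \cdot (G - G_c)^2 + \partial_g P \cdot (g - g_c) + \cdots = 0
\]
yields \eqref{critical} with the explicit positive constant
\[
K(\mu) = G_c(\mu) \sqrt{\frac{G_c(\mu)^2 + \mu}{6\,G_c(\mu)^2 + \mu}}\,.
\]

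For the domain of convergence when $\mu < 0$, I would use a majorization argument: flipping the sign of $\mu$ only alters signs of coefficients of $G_{LO}$ in $g$, so $|G_{LO}(g, \mu)| \leq G_{LO}(|g|, |\mu|)$ as formal series, giving radius of convergence at least $g_c(|\mu|)$; sharpness follows from the fact that the same algebraic analysis applied to $|\mu|$ produces an actual branch point at $g_c(|\mu|)$ of the algebraic continuation of $G_{LO}(\cdot, \mu)$. I anticipate the main technical subtlety to be the global selection of the branch $G_c(\mu)$, i.e.\ ensuring that the formulas \eqref{def_gc} and the cubic define $g_c$ as a single-valued, strictly positive function of $\mu \in \mathbb{R}_+$; everything else reduces to standard implicit-function and singularity-analysis machinery.
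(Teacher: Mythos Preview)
Your approach is essentially the same as the paper's: both invoke Pringsheim, locate the dominant singularity via the vanishing of the $G$-derivative (the paper phrases this as $\Psi'(\tau)=0$ for $g=\Psi(G_{LO}-1)$, you as $\partial_G P=0$; these are the same condition), check that the second derivative does not vanish, and locally invert to obtain the square-root expansion with the identical constant $K(\mu)$. Your discriminant computation for the uniqueness of $G_c(\mu)$ is a welcome detail that the paper merely asserts.

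One small wrinkle: your sharpness argument for $\mu<0$ is misstated. The branch point at $g_c(|\mu|)$ belongs to the curve $P(\cdot,\cdot;|\mu|)=0$, not to any analytic continuation of $G_{LO}(\cdot,\mu)$; indeed the paper itself checks separately that for $\mu<0$ the relevant branch of $P(\cdot,\cdot;\mu)=0$ has \emph{no} singularity at $g=g_c(|\mu|)$. The correct and simpler argument is that, since all $C_{p,q}\geq 0$, absolute convergence of the double series at $(g,\mu)$ is equivalent to convergence at $(|g|,|\mu|)$, so the domain for $\mu<0$ is determined directly by the $\mu\geq 0$ case you have already handled.
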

\begin{proof}
$G_{LO}( g, \mu )$ is a power series in $g$ with positive coefficients, therefore by Pringsheim's theorem it has a singularity at some $g_c (\mu) > 0$.
One moreover has the equation:
\begin{equation}
g = \frac{G_{LO} (g, \mu) - 1 }{ F ( G_{LO} (g, \mu) - 1 )} \equiv \Psi ( G_{LO} (g, \mu) - 1 )\,,
\end{equation}
where $F (u) := ( 1 + u)^4 + \mu (1 + u)^2$ and $\Psi$ are both analytic around $G_{LO} ( 0 , \mu ) - 1 = 0$. Since the function $g \mapsto G_{LO} (g, \mu)$ is not analytic at $g_c (\mu)$, at $\tau = G_{LO} ( g_c (\mu) , \mu ) - 1 > 0$ one must have $\Psi'(\tau) = 0$. Otherwise we could locally invert the previous equation to obtain an analytic dependence of $G_{LO}$ in a neighbourhood of $g_c (\mu)$. This leads to the equation:
\begin{equation}
F( \tau ) - \tau F' (\tau) = 0 
\,,  
\end{equation} 
known to combinatorists as the characteristic equation of the generating function $G_{LO}$ \cite{flajolet}. One can check that its unique real positive solution is $\tau = G_c (\mu ) - 1$, where $G_c (\mu)$ is inferred from the value of $\tau$ given above. In particular, $g_c(\mu)$ is indeed defined by equation (\ref{def_gc}). Moreover, the second derivative does not cancel\footnote{We use the characteristic equation to obtain this formula.}:
\beq
\Psi'' ( \tau ) = - \tau \frac{F'' ( \tau ) }{ [ F(\tau) ]^2} = - \tau \frac{ 12 (1+\tau)^2 + 2 \mu }{ [ F(\tau) ]^2} 
< 0\,.
\eeq
We therefore obtain by Taylor expansion
\begin{equation}
g_c (\mu) - g \approx - \frac{\Psi'' (\tau)}{2} \left( G_{LO} (g, \mu) - G_c (\mu ) \right)^2\,, 
\end{equation}
which can be locally inverted (by use of the singular inversion theorem) to give formula (\ref{critical}) with
\begin{equation}
K (\mu) = \sqrt{- \frac{2 g_c}{\Psi'' (\tau)}} = \sqrt{\frac{2 F(\tau)}{F''(\tau)}} = \sqrt{\frac{{G_c (\mu)}^2 \left( {G_c (\mu)}^2 + \mu \right) }{6 {G_c (\mu)}^2 + \mu}} \,.
\end{equation}

\end{proof}

\

Interestingly, the singular behaviour of $G_{LO}$ can be used to retrieve information about the asymptotics of the coefficients $C_{p,q}$ without explicitly enumerating the trees they count (as we have done in section \ref{sec:counting}). This method is central in analytic combinatorics \cite{flajolet}, and here is an example of what one can infer.

\begin{corollary} For any $\mu \geq 0$, and with the same notations as in Proposition \ref{propo_critical}, the coefficients $\alpha_n (\mu)$ of the power series $G_{LO} ( \cdot , \mu)$ behaves asymptotically as:
\begin{equation}
\alpha_n (\mu) \underset{n \to + \infty}{\sim} \frac{K( \mu ) \, g_c (\mu)^{- n}}{2 \sqrt{\pi} \,n^{3/2}}\,.
\end{equation}
\end{corollary}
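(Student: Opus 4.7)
The plan is to deduce the asymptotics of $\alpha_n(\mu)$ from the singular behaviour of $G_{LO}(\cdot,\mu)$ near $g_c(\mu)$ established in Proposition~\ref{propo_critical}, by a direct application of the transfer theorem of singularity analysis (see \cite{flajolet}, Theorem VI.4). Recall that the transfer theorem states: if a function $f(z)$ is $\Delta$-analytic at a dominant singularity $\rho > 0$ and admits an expansion of the form $f(z) = \sigma - c\,(1 - z/\rho)^{1/2} + O(1 - z/\rho)$ as $z \to \rho$ within the $\Delta$-domain, then
\begin{equation}
[z^n] f(z) \underset{n \to +\infty}{\sim} \frac{-c}{\Gamma(-1/2)} \, n^{-3/2} \, \rho^{-n} = \frac{c}{2\sqrt{\pi}}\, n^{-3/2} \, \rho^{-n}\,,
\end{equation}
since $\Gamma(-1/2) = -2\sqrt{\pi}$. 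Applied to $f = G_{LO}(\cdot,\mu)$ with $\rho = g_c(\mu)$, $\sigma = G_c(\mu)$ and $c = K(\mu)$, this immediately yields the claimed asymptotic formula.

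The first step would therefore be to match the expansion of Proposition~\ref{propo_critical} with the standard scaling form required by the transfer theorem, identifying the coefficient $-K(\mu)$ of the half-integer singular term. The second step, which is the only non-cosmetic one, is to verify the $\Delta$-analyticity hypothesis: namely, that $G_{LO}(\cdot,\mu)$ extends as an analytic function to some domain of the form $\{g \in \mathbb C : |g| < g_c(\mu) + \eta,\ |\arg(g - g_c(\mu))| > \theta\}$ for suitable $\eta > 0$ and $\theta \in (0, \pi/2)$, and that $g_c(\mu)$ is the unique singularity of $G_{LO}(\cdot,\mu)$ on its circle of convergence.

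Both conditions follow from the fact that $G_{LO}$ is determined by the polynomial equation (\ref{cheia}): it is an algebraic function, and the implicit/singular inversion used in the proof of Proposition~\ref{propo_critical} applies not only on the positive real axis but in a full sectorial neighbourhood of $g_c(\mu)$ where $\Psi$ remains locally invertible away from the critical point. Since $G_{LO}(\cdot,\mu)$ has positive Taylor coefficients at the origin and $g_c(\mu) > 0$, Pringsheim's theorem identifies $g_c(\mu)$ as a singularity on the disk of convergence, and the aperiodicity of the characteristic equation (a generic quartic in $G_{LO}$) rules out other singularities on this circle --- this is the standard aperiodicity argument for simple branch-type singularities of smooth implicit-function schemes \cite{flajolet}.

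The main (and essentially only) obstacle is thus the verification of the $\Delta$-analytic continuation and of the uniqueness of the dominant singularity. Once this is in place, the transfer of the square-root singular expansion is purely mechanical, and the constant $K(\mu)/(2\sqrt\pi)$ follows from $-1/\Gamma(-1/2) = 1/(2\sqrt\pi)$. I would present the argument as: (i) state and apply the transfer theorem, (ii) check the two hypotheses above by invoking the smooth implicit-function schema \cite{flajolet} applied to the polynomial relation (\ref{cheia}), (iii) read off the asymptotic constant from (\ref{critical}).
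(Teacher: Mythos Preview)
Your proposal is correct and takes essentially the same approach as the paper. The paper invokes Theorem~VI.6 of \cite{flajolet} (the singular inversion theorem for smooth implicit-function schemas) directly, noting that the aperiodicity of $F(z)=(1+z)^4+\mu(1+z)^2$ ensures via the Daffodil lemma that $g_c(\mu)$ is the unique dominant singularity, and then reads off the asymptotics from the square-root expansion; you instead invoke the more general transfer theorem (Theorem~VI.4) and separately argue the $\Delta$-analyticity and uniqueness of the dominant singularity from the smooth implicit-function schema --- this is precisely what Theorem~VI.6 packages, so the two arguments are the same in substance.
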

\begin{proof} The analytic function $F(z) = ( 1 + z)^4 + \mu (1 + z)^2$ is aperiodic\footnote{Let us define the support $\mathrm{Supp}(F) = \{n \in \mathbb{N} \vert F_n \neq 0 \} = \{0, 1 , 2 , 3 , 4 \}$, where $F(z)= \sum_n F_n \, z^n$. $F$ being aperiodic means that there exists no $r \in \mathbb{N}$ and no integer $d \geq 2$ such that $\mathrm{Supp}(F) \subset r + d \mathbb{N}$, which is clearly the case.}. Hence one can directly apply Theorem VI.6 of \cite{flajolet} (page 405). Let us nonetheless sketch the idea of the proof. The aperiodicity of $F$ implies that of $G_{LO} ( \cdot , \mu)$, and by Daffodil's lemma (see again \cite{flajolet}, page 266), one can deduce that $G_{LO}( \cdot , \mu)$ has no other singularity than $g_c (\mu)$ on the circle $\vert g \vert = g_c ( \mu )$. The application of Cauchy's formula:
\begin{equation}
\alpha_{n} (\mu ) = \frac{1}{2 i \pi} \int_\gamma \frac{G_{LO} (z,\mu)}{z^{n + 1}} \extd z
\end{equation}
to a suitable contour $\gamma$ around $g_c (\mu)$ (known as an Hankel contour) then shows that the asymptotics of the coefficients $\alpha_n (\mu)$ is dictated by the critical behaviour at $g_c (\mu)$. Therefore the known asymptotic expansion of $\sqrt{1 - z}$ at $z=1$ directly yields the asymptotic estimate of $\alpha_n (\mu)$.
\end{proof}

In particular, taking $\mu = 1$, we obtain an estimation of the number of binary--quaternary plane trees of size $n$ (where $n$ is the number of vertices which are neither leaves nor the root). Taking $\mu = 3$ yields in turn an estimation of the number of melonic $2$-point graphs with $n$ elementary melons, which we denote by $\cM_n$. A numerical application of the previous Corollary shows that:
\begin{equation}
\cM_n \underset{n \to + \infty}{\sim} \frac{\chi \, \beta^n}{n^{3/2}}\,,
\end{equation}
with
\begin{equation}
\chi \approx 0.111 \qquad \mathrm{and} \qquad \beta \approx 14.8\,.
\end{equation}
\

Let us now briefly comment on the case $\mu <0$. Given the form of our equation for $G_{LO}$, the negative sign of $\mu$ might in principle allow for multi-critical points, that is critical points at which $\frac{\extd^2 g}{\extd {G_{LO}}^2}$ also cancels out, therefore leading to different scaling behaviours\footnote{More precisely, one would get a behaviour in $\left( 1 - \frac{g}{g_c} \right)^{1/p}$ whenever the first non-zero derivative of $g$ is of order $p \geq 2$.}. Since our definition of $G_{LO}$ is well-controlled in the region $g \leq g_c (\vert \mu \vert)$ only, we only have access to possible singularities on the boundary ($g = g_c (\vert \mu \vert)$). Exploring the phase space further would necessitate a careful study of possible analytic continuations, which is not the purpose of the present paper. We have checked, using the same method as in Proposition \ref{propo_critical}, that no new singularity is present. For instance, let us specialize to the case $\mu = - 1$, which is easier to analyse. A factor $\left( G_{LO} - 1 \right)$ can then be factored out from the relation between $g$ and $G_{LO}$, which simplifies to:
\begin{equation}
g = \frac{1}{G_{LO}^2 \, ( G_{LO} + 1 )}\,.
\end{equation}
At a critical point $(g = g_c , G_{LO} = G_c)$, $\frac{\extd g}{\extd G_{LO}}$ has to vanish, which leads to the characteristic equation:
\begin{equation}
3 {G_c}^2 + 2 G_c = 0\,.
\end{equation}
$G_c = 0$ being excluded, we conclude that $G_c = - \frac{2}{3}$. But then $g_c = \frac{27}{4} > g_c (1)$, which would bring us outside the domain of convergence of $G_{LO}$. Hence there is no singular point when $\mu = -1$, and this conclusion actually holds for arbitrary $\mu <0$.

\

We have summarized our findings on the phase space representation of Figure \ref{critical_fig}.

\begin{figure}[ht]
  \centering
 	\includegraphics[scale=0.4]{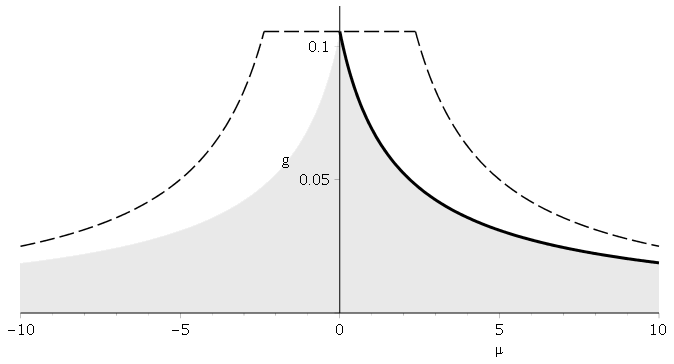}     
  \caption{Dashed line: crude bound on the domain of convergence of $G_{LO}( g , \mu)$; grey region: actual domain of convergence of $G_{LO}(g, \mu)$; black line: critical points.}\label{critical_fig}
\end{figure}

\subsection{Critical exponents}

We now use the critical behaviour of $G_{LO}$ to infer that of the free energy and deduce the value of the susceptibility critical exponent.

\subsubsection{Leading order}

We parametrize the full connected $2$-point function as:
\beq
\frac{1}{Z_N}  \int [\extd T]\, T_{i_1 i_2 i_3} \, T_{j_1 j_2 j_3} \, e^{-S_N [T]} = \frac{C_N}{N^{3/2}} \prod_{k = 1}^{3} \delta_{i_k j_k}\,,
\eeq
in such a way that
\begin{equation}
C_N =  G_{LO} + N^{-1/2} \, G_{NLO} + \ldots 
\end{equation}
The free energy itself is defined as:
\begin{equation}
F_N := \frac{1}{N^3} \ln Z_N = F_{LO} + N^{-1/2} \, F_{NLO} + \ldots
\end{equation}

The relation between $C_N$ and the leading-order free energy $F_N$ is contained in the Dyson-Schwinger equation: 
\begin{align}
	0 &= \frac{1}{Z_N} \sum_{i_1 , i_2 , i_3} \int [\extd T]\,\frac{\delta}{\delta T_{i_1 i_2 i_3}} \left( T_{i_1 i_2 i_3} \, e^{-S_N [T]} \right) \\
	&= N^3 - N^3 \, C_N + \lambda_1 \partial_{\lambda_1} \ln Z_N + 
 \lambda_2 \partial_{\lambda_2} \ln Z_N 
\end{align}
which immediately yields:
\begin{equation}\label{free_energy}
C_N (\lambda_1 , \lambda_2 ) = 1 + \left( \lambda_1 \partial_{\lambda_1} + 
 \lambda_2 \partial_{\lambda_2} \right) F_N (\lambda_1 , \lambda_2 ) \,.
\end{equation}
Extracting the leading-order contributions, and resorting to the variables $g$ and $\mu$ provides the looked for relation between $G_{LO}$ and $F_{LO}$:
\begin{equation}\label{G_F_LO}
G_{LO} (g , \mu ) = 1 + \left( 2 g \, \partial_g - \mu \, \partial_\mu \right) F_{LO} (g , \mu )\,.
\end{equation}

Close to the critical point $g_c (\mu)$, one can parametrize the most singular part of $F_{LO}$ as
\begin{equation}
K_1(\mu) \left( 1 - \frac{g}{g_c (\mu)} \right)^{2 - \gamma_{LO}}
\end{equation}
for some $K_1(\mu)$ independent of $g$. The critical exponent $\gamma_{LO}$ is the leading order susceptibility exponent and is, by equation (\ref{G_F_LO}), equal to:
\begin{equation}
\gamma_{LO}=\frac{1}{2}.
\end{equation}
This is the same critical exponent as for the $U(N)^{\otimes 3}$ invariant and MO models. This indicates that all these models have the same universal properties in the critical regime and at leading order.

\subsubsection{Next-to-leading order}

In order to compute the susceptibility exponent $\gamma_{NLO}$, one may try to directly infer the critical behaviour at next-to-leading order from the leading order one. This can be achieved by means of equation \eqref{nlo}, together with the standard QFT identity relating the connected two-point function $G_{NLO}$ to the connected two-point function $G_{LO}$ and to the $1$PI NLO two-point function $\Sigma_{NLO}$:
\begin{equation}
G_{NLO}=G_{LO}^2 \, \Sigma_{NLO}\,.
\end{equation}
Using these two equations, one gets:
\begin{equation}\label{nlo2}
G_{NLO}=\frac{ - \lambda_1 G_{LO}^3}{1+\lambda_2 \, G_{LO}^2 - 3 {\lambda_1}^2 \, G_{LO}^4} = \frac{ - \sqrt{g} \, G_{LO}^3}{1 - g \mu  G_{LO}^2 - 3 g G_{LO}^4}.
\end{equation}
Using the leading order two-point function identity \eqref{cheia}, one gets:
\begin{equation}
\frac{\partial G_{LO}}{\partial g}= \frac{G_{LO}^3 (G_{LO}^2+\mu)}{G_{LO} - 2g G_{LO}^2 ( 2 G_{LO}^2 + \mu)}.
\end{equation}
Using again identity \eqref{cheia} to express the first term of the denominator, one has:
\begin{equation}
\frac{\partial G_{LO}}{\partial g}= \frac{G_{LO}^3 (G_{LO}^2+\mu)}{1 - g \mu  G_{LO}^2 - 3 g G_{LO}^4}.
\end{equation}
One can then re-express the NLO two-point function \eqref{nlo2} as
\begin{equation}
G_{NLO}=\frac{- \sqrt{g}}{G_{LO}^2 +\mu}\frac{\partial G_{LO}}{\partial g}.
\end{equation}
We can therefore use the same argument as at leading order. First, the critical behaviour of $G_{LO}$ implies that the most singular contribution of $G_{NLO}$ is in
\begin{equation}
	\left(1 - \frac{g}{g_c (\mu)}\right)^{-1/2} \,.
\end{equation}
Second, as a consequence of relation (\ref{free_energy}), the most singular part of $F_{NLO}$ behaves as
\begin{equation}
K_2(\mu) \left(1 - \frac{g}{g_c (\mu)}\right)^{1/2} 
\end{equation} 
for some function $K_2 (\mu)$ independent of $g$.

\

We thus find the same critical value of the coupling constant (i.e. the radius of convergence) for the NLO series (as series in the coupling constant $g$) as for the leading order series. Nevertheless, one has a distinct value for the NLO susceptibility exponent:
\begin{equation}
\label{main}
\gamma_{NLO}=\frac{3}{2}\,.
\end{equation}
This again coincides with the critical exponents found in the complex and MO models. Hence we expect these three types of theory to remain in the same universality class also at next-to-leading order.




\section{Conclusion}

In this paper we have initiated the study of random tensor models with $\mO(N)^{\otimes 3}$ symmetry, which generalize both $\U(N)$ invariant and multi-orientable tensor models. Interactions are labeled by $3$-colored graphs -- not necessarily bipartite -- which represent triangulated surfaces, including non-orientable ones. We have first shown that, like $\U(N)$ invariant models, they admit a $1/N$ expansion for any finite number of non-zero coupling constants. As expected, the melonic graphs of MO and $\U(N)$ invariant models are all generated at leading order. We then focused on the quartic theory, with two types of interactions: the so-called pillow terms which represent triangulated spheres; and an interaction with the combinatorial structure of the tetrahedron, which as a colored graph however represents the projective plane (and is therefore non-orientable). We have fully characterized the leading order sector of this model, showing that it contains no more than melonic graphs. We then showed that the next-to-leading order graphs are generated by melonic $2$-point function insertions in three possible core graphs, which are nothing but colored versions of the unique core graph of the MO model. Finally, using standard techniques from analytical combinatorics, we determined the critical behaviour of the leading and next-to-leading $2$-point functions. This allowed us to reproduce the critical exponents of $\U(N)$ invariant and MO random tensor models. As a result, we may conjecture that the real model also lies in the universality class of branched polymers \cite{jr_branched}.

\medskip 

A first natural follow-up of this work is the definition of a double scaling limit for the quartic model, which amounts to simultaneously take the large $N$ limit and send the parameter $g$ to its critical value. This requires the knowledge of the critical exponents which we have computed. One might also wish to investigate further the properties of models with more interactions, which will in particular exhibit multi-critical points for specific choices of the coupling constants (see \cite{uncolored}). Finally, the colored bubbles of $\mO (N)$ invariant tensor models could be used to enlarge the theory space of tensorial group field theories, which has so far been based on $\U (N)$ invariants. This provides in particular a natural arena in which to investigate the renormalizability of MO group field theories of the type defined in \cite{MO-original}.

\bigskip

{\bf Acknowledgements:} S. C. is supported by the ANR JCJC CombPhysMat2Tens grant. A. T. is partially supported by the ANR JCJC CombPhysMat2Tens and PN 09 37 01
02 grants.

\bibliographystyle{hunsrt}
\bibliography{biblio}
%
%
%

\end{document}